\pgfplotsset{compat=1.6}
\newtheorem{definition}{{\sc\bf Definition}}
\newtheorem{proposition}{{\sc\bf Proposition}}
\newtheorem{theorem}{{\sc\bf Theorem}}
\newtheorem{lemma}{{\sc\bf Lemma}}
\newtheorem{example}{{\sc\bf Example}}
\newtheorem{remark}{{\sc\bf Remark}}
\newtheorem{assumption}[theorem]{Assumption}
\def\N{{\mathbb N}}
\def\Z{{\mathbb Z}}
\def\P{{\mathbb P}}
\def\E{{\mathbb E}}
\def\R{{\mathbb R}}
\def\P{\mathbb{P}}
\def\cals_+{{\cals_+}}
\def\calh{{\mathcal{H}}}
\def\call{{\mathcal{L}}}
\def\cals{{\mathcal{S}}}
\newcommand{\var}{{\rm Var}}
\newcommand{\cov}{{\rm Cov}}
\newcommand{\dd}{\mathrm{d}}
\newcommand{\wh}{\widehat}
\newcommand{\stas}{\stackrel{\rm a.s.}{\rightarrow}}
\def\BState{\State\hskip-\ALG@thistlm}
\title{Variable selection for the prediction of C[0,1]-valued AR processes using RKHS}
\author{
Beatriz Bueno-Larraz\footnote{Universidad Aut\'onoma de Madrid, Departamento de Matem\'aticas, Spain, email: \tt{beatriz.bueno@uam.es} \& \tt{beatriz.bueno.larraz@gmail.com}}\;\footnote{Corresponding author}
\and Johannes Klepsch\footnote{Center for Mathematical Sciences, Technische Universit\"at M\"unchen,  85748 Garching, Boltzmannstra{\ss}e 3, Germany, email: \tt{j.j.klepsch@gmail.com}}
}\date{March 2018}
\begin{document}

\maketitle

\begin{abstract}
\setlength{\baselineskip}{1.8em}
A model for the prediction of functional time series is introduced, where observations are assumed to be continuous random functions. We model the dependence of the data with a nonstandard autoregressive structure, motivated in terms of the Reproducing Kernel Hilbert Space (RKHS) generated by the auto-covariance function of the data. The new approach helps to find relevant points of the curves in terms of prediction accuracy. This dimension reduction technique is particularly useful for applications, since the results are usually directly interpretable in terms of the original curves. An empirical study involving real and simulated data is included, which generates competitive results. Supplementary material includes R-Code, tables and mathematical comments.
 \medskip \\
\noindent {\bf Keywords:} Functional data analysis (FDA); Functional linear process; AR; Time series; RKHS; Dimension reduction\\
\noindent {\bf MSC 2010:} Primary: 62M10, 62M15, 62M20; Secondary: 62H25, 60G25
\end{abstract}

\

\section{Introduction}

Functional data analysis (FDA) is one of the  answers to the recent rise of complex data. It consists in viewing observations as entire curves instead of individual data points. 
For a recent overview of the literature of FDA we refer to \citet{cuevas2014}. 

\

\textit{Functional time series and variable selection}

In many applications the data consists of a single curve $z(s)$ sequentially recorded in time, up to an instant point $s\in (-\infty,N]$ (or $[0,N])$. Typically these curves present a periodical behavior. Common examples are daily financial or meteorological records. This directly suggests to split the curve $z$ into pieces of the same length, sharing a common structure. These new curves are denoted as $x_n\equiv x_n(\cdot)$, for $n\leq N$, and are usually rescaled to the interval $[0,1]$ in order to make them independent of the time unit of measure. Each of these curves $x_n$ are understood to be a realization of the corresponding random process $X_n$, which is a random variable whose samples are functions instead of real numbers. The set of these processes $X_n$, indexed by $n\in\Z$, is known as \textit{functional time series} (see \cite{alvarez2017} for details). In this work we focus on predicting the curve $x_n$ given the previous ones.

In practice, these curves are usually recorded as high-dimensional vectors with highly correlated entrances. Then, the need of dimension reduction techniques that take into account the continuous nature of the data arises. We propose to replace the whole curves $x_n(\cdot)$ with the $p$ most relevant evaluations $x_n(t_1),\ldots, x_n(t_p)$ for the prediction of $x_{n+1}(\cdot)$ (in a similar sense as in \citet{kargin} and \citet{mokhtari2003}). This is equivalent to directly select the points $t_1,\ldots,t_p$ in $[0,1]$ under a suitable optimality criterion. Although we end up with a finite dimensional vector, the problem is fully functional, since the definition of this criterion is based on the whole curves. This technique is commonly know as \textit{variable selection}. Its main advantage in comparison with other dimension reduction techniques is the interpretability in terms of the original data, which is usually desired in real applications. For instance, one of the real data sets tested is the daily indoor temperature of a house which uses solar energy. With our technique we detect that one of the most relevant points to predict the temperature one day is 8:00 of the previous morning, which is not evident at first sight.

In the present paper we introduce a predictor based on a functional autoregressive (AR) model which will be especially suitable for variable selection purposes. \citet{bosq} gives a good introduction into the field of linear processes in function spaces and introduces functional autoregressive processes in depth. AR models have shown to be a valuable tool in functional time series analysis since they combine computational tractability with sufficient generality (e.g. \cite{bessecardot,kargin,didericksen,aue}). In particular, we adapt the methodology introduced by \citet{berrendero2018} for the problem of scalar regression for independent data. We define a new functional autoregressive model based on the Reproducing Kernel Hilbert Space (RKHS) generated by the auto-covariance function of the time series. 

\

\textit{Classical approaches}

The standard assumption in the literature is that $L^2[0,1]$, the space of square-integrable functions on $[0,1]$, is the space into which the curves fall. This is a sensible choice, since $L^2[0,1]$ is a separable Hilbert space and offers desirable geometric properties through the definition of the natural scalar product. However, considering our variable selection purpose, the main drawback of $L^2[0,1]$ is that, strictly speaking, it consists of equivalence classes of functions. That is, two functions represent the same $L^2$-function if the set where they differ has measure zero. In other words, for any particular point $s\in[0,1]$, the value $f(s)$ is not well-defined for any function $f$ in this space. 

Since we require pointwise evaluations $x_n(s)$ of the curves, the space of continuous functions on $[0,1]$, $C[0,1]$, which is a Banach space with the supremum-norm, is a more natural space to work in. In addition, the subsequent change of norm allows us to obtain uniform convergence results. The problem of estimating AR models in $C[0,1]$ has been already addressed in the literature (e.g. \cite{ruiz2018} and references therein). The usual methodology is to project the curves onto a finite dimensional subspace of $C[0,1]$, spanned by some eigenfunctions of the covariance operator of the data, as in \citet{pumo1998}. Some limitations of this principal component approach have been discussed extensively in the literature. For instance, the resulting space is shown to be optimal in order to represent the variability of the process, but the dependence  might be lost by the dimension reduction (\citet{kargin} and \citet{hoermann}). Furthermore, \citet{bernard} indicates the sensitivity of the proposal to small errors in the estimation of small eigenvalues.

\

\textit{Some relevant results of this work}

In this paper the projection on a finite dimensional space is replaced by the choice of the evaluations $x_n(t_1),\ldots, x_n(t_p)$. 
In the cases relevant for variable selection, it is shown that the new model falls into the wide class of Banach space--valued processes (ARB$(q)$) introduced in \citet{bosq}, which directly gives us sufficient conditions for the existence of a unique stationary solution. Notably, for instance the well-known Ornstein-Uhlenbeck process satisfies our model. 
In these cases we are able to prove, under some standard conditions, almost sure convergence of the estimated points and also of the estimated curves, both uniformly and in $L^2[0,1]$. In this setting our predictor coincides with the optimal one, in the sense that it is the best probabilistic predictor, as stated in \citet{mokhtari2003}. In addition, we develop a consistent estimator for the number $p$ of relevant variables to select.  

The advantages of predicting autoregressive processes with this new approach are numerous, besides the ones already mentioned. From an applied perspective, the proposed method is flexible concerning the structure of the data: whether it is recorded on a grid or available as continuous functions - the methodology remains similar with slight technical differences. Nevertheless, for theoretical considerations the data is assumed to be given in a fully functional fashion. Besides, the use of this RKHS based model avoids the need of inverting the covariance operator since this is carried out, in some sense, by the inner product of the space. 

In order to show the practical relevance of the method, a simulation study is conducted. To evaluate the performance in the real world four real data sets are studied. The execution times of the tested methods are also analyzed. Our proposal is competitive both in prediction accuracy and computational efficiency.

\

\textit{Related literature}

The literature in the field of functional time series analysis is developing quickly.  Recent publications include time-domain methods like \citet{weaklydep}, where a weak dependence concept is introduced, \citet{aue}, \citet{kk} and \citet{KKW}, where prediction methodologies based on linear models are developed, and \citet{auekle}, where an estimator of functional linear processes based on moving average model fitting is derived. Besides, another examples of statistical papers taking advantage of the usefulness of Reproducing Kernel Hilbert Spaces are, among others, \citet{hsing, kad16, berrendero2017class, berrendero2018} and \cite{yang2018}. 

Some interesting variable selection techniques for functional regression with independent observations are \cite{aneiros2014, delaigle2012, mckeague2010, ferraty2010} and \cite[Sec. 4.3]{shi2011}. See also Section 7.3 of \cite{cuevas2014} for a brief review. There are also some proposals for feature selection on multivariate time series, like \cite{fan2010,lam2012,liu2014,tran2015} and the references therein. However, as far as we know, there are no previous approaches to variable selection for continuous time series in the same sense as it is presented here.

\

\textit{Organization of the document}

The paper is organised as follows. After introducing the notation and some background on RKHS theory, we define in Section \ref{Sec:VS} the new functional autoregressive model for variable selection. In Appendix A the mathematical properties of such model are carefully studied. The asymptotic properties of the sample estimators are shown in Section \ref{Sec:Asymp}. In this section the estimator for the number $p$ of relevant variables is presented. Sections \ref{Sec:ExpSetting} and \ref{Sec:Experiments} include the experimental study along with some practical consideration for the implementation. Some proofs, which are mainly based on the theory developed in \cite{berrendero2018}, are included in Appendix B. Appendix C includes some tables and pseudo-codes.


\section{Methodology}


\subsection{Notation}

Our object of interest is a random curve observed over $(-\infty,N]$, for $N\in\Z$ (or over $[0,N]$). In order to better handle its behavior when $N$ increases, the whole curve is usually split into intervals of the same length. Then, each piece is a function rescaled to $[0,1]$, randomly generated from a stochastic process $X_n$, for $n\in \Z$ and $n\leq N$. A stochastic process can be thought as a random variable whose samples, denoted as $x_n$, are functions. Such a set of stochastic processes is known as a functional time series.

We assume that the curves inside each interval of length one are continuous, that is, each $X_n$ takes values in $C[0,1]$ (the space of continuous functions over $[0,1]$). We denote by $\Vert \cdot \Vert$ the supremum norm in this space, 
$$\Vert f \Vert = \sup_{s\in[0,1]}|f(s)|, \ \text{ for } f\in C[0,1].$$ 

A standard assumption is that the random variable $\sup_{s\in[0,1]}|X_n(s)|$ has finite variance. In this case each evaluation $X_n(s)$, for $s\in [0,1]$, also has finite variance. A functional time series satisfying this condition is said to be stationary if its mean functions $\E[X_n]$ do not change with $n$ and $\cov\big(X_{n+r}(s),X_n(t)\big)$ equals $\cov\big(X_r(s),X_0(t)\big)$, for every $s,t\in[0,1]$, $n,r\in \{0\}\cup\N$. Under stationarity the lagged covariance function, $\cov\big(X_r(s),X_0(t)\big)$, is denoted as $c_r(s,t)$. For the sake of clarity in the equations, and since the time series throughout this work are stationary, we make the following abuse of notation: we assume to work with the centered processes $X_n-\E[ X_n]$, denoted simply by $X_n$. 

We denote the vectors of points as $T_p=(t_1,\ldots,t_p)\in [0,1]^p$, where the subindex indicates the dimension. The covariance matrix of the random variables $X_n(t_1),\ldots,X_n(t_p)$ indexed by $T_p$ is $\Sigma_{T_p}$. Moreover, for a general function $f:[0,1]\to\R$, the evaluation $f(T_p)$ is understood to be the column vector with entries $f(t_j)$, $t_j\in T_p$. Similarly for functions in two variables.

As usual in statistics, we use a hat to denote the estimations derived from the samples. In addition, an asterisk will indicate the optimal quantities under some criterion.


\subsection{Some background on Reproducing Kernel Hilbert Spaces}\label{Sec:RKHS}

The following discussion, although interesting, is not directly needed in order to apply the proposed method in practice. Therefore, a more applied reader could directly skip to Section \ref{Sec:VS} if desired.

Let $X(\cdot)\equiv X_n(\cdot)$ be a centered stochastic process in $[0,1]$ such that $X(s)$ has finite variance for all $s\in[0,1]$. Being $c_0(s,t)$ the auto-covariance function (or covariance kernel) of the process, we can define the auxiliary space
\begin{equation}
\calh_0(X):=\big\{f\in L^2[0,1] \ : \ f(\cdot)=\sum_{i=1}^p a_ic_0(t_i,\cdot),\ a_i\in{\mathbb R},\ t_i\in[0,1],\ p\in{\mathbb N}\big\}
\end{equation}
with inner product
$
\langle f,g\rangle_{\calh_0}=\sum_{i,j}a_i b_j c_0(t_i,s_j),
$
where $f(\cdot)=\sum_i a_i c_0(t_i,\cdot)$ and $g(\cdot)=\sum_j b_jc_0(s_j,\cdot)$. The RKHS associated with $c_0$, $\calh(X)$, is the completion of this space; all the functions of $\calh_0(X)$ plus their limits when $p\to\infty$ with respect to the norm $\Vert\cdot\Vert_{\calh_0}$. The norm in the RKHS is the continuous extension of the norm in $\calh_0(X)$ and it is denoted as $\Vert \cdot \Vert_\calh$. Thus, $\Vert f\Vert_\calh$ coincides with $\Vert f\Vert_{\calh_0}$ for all function $f\in\calh_0(X)$.


Reproducing Kernel Hilbert Spaces appear occasionally in the literature of functional data and machine learning, since they are useful to impose smoothness conditions and help to reduce noise and irrelevant information. However we use them with a different goal, since our interest lies in variable selection. Therefore, the so-called \textit{reproducing property} of these spaces is particularly useful. It states that, for all $f\in \calh(X)$ and $s\in[0,1]$, \ $\langle f, c_0(s,\cdot)\rangle_\calh = f(s)$. That is, the auto-covariance function behaves, in some sense, like Dirac's delta. We are interested in selecting variables on the trajectories drawn from the process $X(\cdot)$. However, in general, the realizations of the process do not belong to $\calh(X)$ with probability one (e.g., Theorem 11 of \cite{pillai2007}). Thus, this reproducing property can not be directly  applied to the trajectories. 

In order to circumvent this issue, we use another Hilbert space also closely related to the process,
$$\call_0 (X) \ = \ \big\{ U : \ U = \sum_{i=1}^p a_i X(t_i), \ a_i\in\R, \ t_i\in [0,1], \ p \in \N \big\}.$$
As before, $\call(X)$ is the closure of this space. All the random variables in this space have finite variance, which coincides with their norms.

\begin{remark}\label{Rem:Dense}
By definition of both spaces, the finite sums $\sum_{i=1}^pa_ic_0(t_i,\cdot)$ are dense in $\calh(X)$ and the finite sums $\sum_{i=1}^pa_iX(t_i)$ are dense in $\call(X)$, with their corresponding norms (i.e. every function of these spaces can be arbitrarily well approximated with a finite sum).
\end{remark}

These two spaces $\call (X)$ and $\calh(X)$ can be connected using the following congruence (bijective transformation preserving the inner product), named \textit{Lo\`eve's isometry} (\citet[Theorem 35]{berlinet2004} and \citet[Lemma 1.1]{lukic2001})
\begin{equation}\label{Eq:Isom}
    \begin{array}{c c c l}
\Psi_X: & \call(X) &\to& \calh(X) \\
& U &\mapsto& \E [U X(\cdot)].
\end{array}
\end{equation}
If the random variable $U$ is an element of $\call_0(X)$, i.e. $U = \sum_{i=1}^p a_i X(t_i)$, its image by the isometry is given by
\begin{align}
    \big(\Psi_X(U)\big)(\cdot) \ = \ \E \Big[ X(\cdot)\sum_{i=1}^p a_i X(t_i)\Big] \ = \ \sum_{i=1}^p a_i \E [ X(\cdot) X(t_i) ] \ = \ \sum_{i=1}^p a_i c_0(t_i,\cdot) \label{eq:isomfin}.
\end{align}
Therefore, when applying the inverse of $\Psi_X$ to a function in $\calh_0(X)$ we obtain a finite combination of evaluations $X(t_i)$. That is, replacing the inner product in $\calh(X)$ with $\Psi_X^{-1}$ we recover the Dirac's delta behavior for the trajectories, since $\Psi_X^{-1}(c_0(s,\cdot))=X(s)$. Next we see how to use this isometry to perform variable selection in AR processes.


\section{Model definition and variable selection}\label{Sec:VS}

Given $x_n\in C[0,1], n\in\Z,$ trajectories drawn from a functional time series $X_n$, the standard autoregressive model is of the form (see Chapter~6 \citet{bosq})
\begin{align}
    x_n = \rho (x_{n-1}) + \varepsilon_n, \qquad n\in\Z, \label{ARB}
\end{align}
for some white noise process $\varepsilon_{n}, n\in\Z,$ in $C[0,1]$ and some bounded linear operator $\rho$ (i.e. $\sup_f\Vert \rho(f)\Vert$ finite for $f\in C[0,1]$ and $\Vert f \Vert \leq 1$). In this section we propose a particular functional AR model ``customized'' to give a well-founded framework for variable selection.


We work with a centered stationary time series $X_n$ in $C[0,1]$ such that $\E\big[\big(\sup_s |X_n(s)|\big)^2\big]$ is finite. In order to perform variable selection we propose to define the model
\begin{equation}\label{Eq:SparseX}
 X_n(\cdot) \ = \ \sum_{j=1}^p \alpha_j(\cdot) X_{n-1}(t_j) + \varepsilon_n(\cdot),  
\end{equation}
where $\alpha_j(\cdot)$ are continuous functions in $[0,1]$ and $\varepsilon_n$ is a strong $C[0,1]$-white noise point-wisely uncorrelated with $X_n$. That is, all the curves depend on the same set of points $t_1,\ldots,t_p$, regardless of the index $n$. Comparing this expression with the standard AR model of Equation \eqref{ARB}, we are using
$\rho (f)(\cdot) = \sum_{j=1}^p \alpha_j(\cdot) f(t_j)$, for $f\in C[0,1]$.
The following example, taken from \cite{bosq}, shows that this model class has good amount of generality.
\begin{example}\label{Ex:OU}
Let $Z$ be a continuous version of the Ornstein-Uhlenbeck process, 
$$Z(s) = \int_{-\infty}^s e^{-\theta(s-t)}\dd B(t),$$
where $B$ is a standard Brownian motion. If we define $X_n(s) = Z(n+s)$ for $s\in[0,1]$, $X_n(s)$ can be rewritten as $e^{-\theta s}X_{n-1}(1) +  \varepsilon_n(s)$, where now $\varepsilon_n$ is a white noise given by $\int_n^{n+s} e^{-\theta(n + s - t)}\dd B(t)$. That is, Equation \eqref{Eq:SparseX} is fulfilled with $p=1$, $t_1 = 1$ and $\alpha_1(s) = e^{-\theta s}$. 
\end{example}

This is reasonable, since O.U. processes are Markov, which means that the probability events involving $X_n(\cdot)$ only depend on $X_{n-1}(1)$. For real data sets one can not ensure whether model \eqref{Eq:SparseX} holds, but the results in Section \ref{Sec:Experiments} reveal that it is, at least, a good approximation of reality. In any case, it is important to emphasize that in order to carry out the variable selection we do not assume that the process fits model \eqref{Eq:SparseX}. We merely search the points $t_1,\ldots,t_p$ such that an expression as in \eqref{Eq:SparseX} approximates the real process $X_n$ according to some criterion. 
 
Then, we propose the following definition.
\begin{definition}\label{def:fcar1}
A functional time series $X_n$ such that $\E \big[(\sup_s |X_n(s)|)^2\big]<\infty$, $n\in\Z$, is called Functional Continuous Autoregressive process of order 1 (FCAR(1)) if it is stationary and  it can be expressed as in Equation \eqref{Eq:SparseX}.
\end{definition}

As pointed out in Appendix A, a sufficient condition for a FCAR(1) process to have a unique strictly stationary solution is that $\sum_{i=1}^p \Vert \alpha_i \Vert < 1$, which is an easily verifiable condition. 

\begin{assumption} \label{ass:1}
$X_n, n\in\Z$ is a FCAR process such that $\sum_{i=1}^p \Vert \alpha_i \Vert < 1$. 
\end{assumption}

An extension of model FCAR(1) to FCAR(q) can be carried out whenever $X_n(s) = Z(s+n)$ for $s\in [0,1]$ and $Z$ is a stationary process with continuous trajectories. All the theory presented in the paper remains valid in this case, with some intuitive additional assumptions on the model parameters. For the sake of clarity we restrict ourselves to $q=1$. Nevertheless, we include some comments along the document to clarify the changes due to this extension. One of the necessary generalizations is to split the process as $Z_{n,q}(s) = Z(s+n-q+1)$, $s\in [0,q]$.  Now $\call (Z_{n,q})$ is the space generated by the random variables $X_n(s),\ldots,X_{n-q}(s)$ and $\calh (Z_{n,q})$ is the RKHS associated with this lagged process, whose reproducing kernel $c_0(s,t)$ for $s,t\in[0,q]$ is the auto-covariance function of $Z_{n,q}$. Thus, Equation \eqref{Eq:SparseX} is rewritten as
$$X_n(\cdot) \ = \ \sum_{j=1}^{p^{(1)}} \alpha_j^{(1)}(\cdot) \, X_{n-1}\big(t_j^{(1)}-q+1\big)  + \ldots +  \sum_{j=1}^{p^{(q)}} \alpha_j^{(q)}(\cdot) \, X_{n-q}\big(t_j^{(q)}\big).$$ 

An extension of model \eqref{Eq:SparseX} is analyzed hereunder, making clear that this approach includes a wide class of processes, and not only the finite approximations of Equation \eqref{Eq:SparseX}. 

Model \eqref{Eq:SparseX} is a particular case of the more general (and fully functional) model,
\begin{align}
X_n(\cdot) \ = \ \Psi_{X_{n-1}}^{-1}(\phi(\cdot,\star)) + \varepsilon_n(\cdot),  \qquad n\in\Z,  \label{fullmodel}
\end{align}
where $\phi\in\calh(X)$ and $\Psi_{X_{n-1}}^{-1}$ is the inverse of the Lo\`eve's isometry defined in Equation \eqref{Eq:Isom}. This more general formulation is useful when trying to prove asymptotic results on the model, although it is not directly used in practice. As it is sometimes the case, a slight generalization helps to make theory simpler. Another benefit of changing the work space to $\calh(X)$ is that finding a solution of this model does not require to invert the covariance operator. It could be understood as if the ``inversion'' was intrinsically carried out by the Lo\`eve's isometry. By means of Equation \eqref{eq:isomfin}, it is easy to see that model \eqref{Eq:SparseX} is recovered for 
\begin{equation}\label{Eq:SparseGeneral}
    \phi(s,\cdot) \ = \ \sum_{j=1}^{p} \alpha_j(s) c_0(t_j,\cdot) \ \in \ \calh (X).
\end{equation}
In fact, both models are almost equivalent since, as pointed out in Remark \ref{Rem:Dense}, these finite linear combinations are dense in $\calh(X)$ (at the end of Appendix A it is included a convergence result in this regard). Therefore, any possible function $\phi(s,\cdot)$ in \eqref{fullmodel} can be approximated arbitrarily well by just increasing the number of points $p$. Besides, in most cases $\calh(X)$ is a dense subspace of $L^2[0,1]$. The practical examples displayed in Section \ref{Sec:Experiments} demonstrate that a small number $p$ is usually enough to obtain good approximations. These experimental results also support the claim that any function can be approximated as in \eqref{Eq:SparseGeneral}, since the method performs well even when the model is not satisfied.


\subsection{Optimality criteria}

We now focus on the optimality criterion for variable selection. The main objective of the regression method is to predict $X_n(\cdot)$ given $X_{n-1}(\cdot)$. Since each $X_n(s)$ for $s\in[0,1]$ has finite variance, the first approach could be to minimize the variances of the residuals
$$q(T_p\ ; \ \alpha_1,\ldots, \alpha_p) (s) \ = \ \E\Big[\Big( X_n(s) - \sum_{j=1}^p \alpha_j(s)X_{n-1}(t_j) \Big)^2\Big],$$
in the same spirit as in \citet{berrendero2018}, where the coefficients $\alpha_j(s)$ (which are just real numbers) depend on the points $t_1,\ldots,t_p$. Due to the stationarity of the process, $q$ does not change with $n$. 

In model \eqref{Eq:SparseX} all processes $X_n(\cdot)$ depend on the same set of points  $t_1,\ldots,t_p$, independently of $s$. Therefore, we have to find the functions $\alpha_j(\cdot)$ such that for each $s\in[0,1]$ the evaluations $\alpha_j(s)$ give the best approximation of $X_n(s)$ for a given set of points $T_p$. Then, integrating $q^2$ over $s$ leads to
\begin{equation}\label{Q1}
Q(T_p) \ := \  \int_0^1 \min_{\alpha_j(s)\in\R} q(T_p\ ; \ \alpha_1,\ldots, \alpha_p)^2 (s) \ \mathrm{d}s.
\end{equation}
This function $Q$ can now be minimized with respect to $T_p$. From \cite{berrendero2018} we know that $q^2(s)$ is a convex function in $\alpha_j(s)$ for each $s\in[0,1]$. Thus, we can obtain an explicit expression of the minimizing functions, denoted by $\alpha_j^*(s)$, pointwise for each $s\in[0,1]$. We see in the proof of Proposition \ref{Prop:Equiv} that these optimal functions are given by 
$$(\alpha_1^*(s),\ldots,\alpha_p^*(s)) \ = \ \Sigma_{T_p}^{-1} \ c_1(s,T_p),$$
where $c_1(\cdot,T_p) = (c_1(\cdot,t_1),\ldots,c_1(\cdot,t_p))'$ is the vector of lagged-covariances. 

Up to now we have assumed that all points $t_j$ belong to $[0,1]$. However, if we want to have identifiability of the set $T_p \in [0,1]^p$, we need to restrict the search to a compact subset of this space. This problem has been solved in different ways in the literature. The chosen solution is to work, for some $\delta>0$, in
$$\Theta_p = \{ T_p=(t_1,\ldots,t_p)\in [0,1]^p \ : \ t_{i+1}-t_i\geq \delta, \ \text{for} \ i=1,\ldots,p \},$$
which is the space proposed in \cite{berrendero2018}. Other possibilities can be studied, for instance the space used by \citet{muller2016}. The choice of a value $\delta>0$ is mainly technical, to avoid problems with the invertibility of the covariance matrices. In addition it allows us to obtain meaningful sets $T_p$, since it discards reoccuring points. It is also not a strong restriction in practice, since usually the data is given in a discretized fashion, and the value $\delta$ can be chosen as small as desired. However, as pointed out in \cite{berrendero2018}, all the theory remains valid for $\delta=0$, by simply adjusting everywhere the value $p$ to the dimension of the vector without repeated entries.

Although the optimality criterion defined by $Q$ is theoretically sound, it lacks an easily computable expression. We see in the following result that it can be rewritten in a more feasible way. 

\begin{proposition}\label{Prop:Equiv}
Let $X_n, n\in\Z$ be a FCAR(1) process satisfying Assumption~\ref{ass:1} and with $\E\Vert\varepsilon_n^2\Vert <\infty$. 
Then, 
$$\arg\min_{T_p\in\Theta_p} Q(T_p) \ = \ \arg\max_{T_p\in\Theta_p} Q^0(T_p),$$
where $Q^0(T_p) := \int_0^1 c_1(s,T_p)' \Sigma_{T_p}^{-1} c_1(s,T_p) \dd s$.
\end{proposition}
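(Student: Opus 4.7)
The approach is to carry out the inner pointwise minimization over the $\alpha_j(s)$ in closed form, and then observe that after this minimization $Q(T_p)$ splits into a $T_p$-independent constant plus $-Q^0(T_p)$. The equivalence of $\argmin Q$ and $\argmax Q^0$ then follows immediately from the fact that only the $-Q^0(T_p)$ piece depends on $T_p$.

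First I would fix $T_p\in\Theta_p$ and $s\in[0,1]$ and expand the square inside the expectation defining $q(T_p;\alpha_1,\ldots,\alpha_p)(s)$. Using stationarity to identify $\E[X_n(s)^2]=c_0(s,s)$, $\E[X_n(s)X_{n-1}(t_j)]=c_1(s,t_j)$, and $\E[X_{n-1}(t_i)X_{n-1}(t_j)]=[\Sigma_{T_p}]_{ij}$, the function becomes a convex quadratic in the vector $\alpha(s)=(\alpha_1(s),\ldots,\alpha_p(s))'\in\R^p$ at each fixed $s$. Exploiting positive definiteness of $\Sigma_{T_p}$ on $\Theta_p$ (which the $\delta$-separation in $\Theta_p$ is designed to guarantee), the pointwise minimization over $\R^p$ yields the announced $\alpha_j^*(s)$ and the minimum value
\[
\min_{\alpha(s)\in\R^p} q(T_p;\alpha_1,\ldots,\alpha_p)(s) \ = \ c_0(s,s) - c_1(s,T_p)'\Sigma_{T_p}^{-1}c_1(s,T_p).
\]

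Integrating in $s$ over $[0,1]$ then gives $Q(T_p)=\int_0^1 c_0(s,s)\,\dd s-Q^0(T_p)$, where the first term is independent of $T_p$. Consequently
\[
\argmin_{T_p\in\Theta_p}Q(T_p) \ = \ \argmax_{T_p\in\Theta_p}Q^0(T_p),
\]
as claimed.

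The main technical point I expect is ensuring the uniform invertibility of $\Sigma_{T_p}$ on $\Theta_p$, without which the closed-form pointwise minimizer would not be well defined. The $\delta$-separation in $\Theta_p$ is exactly what prevents this degeneracy: together with continuity of $c_0$ on $[0,1]^2$ and Assumption~\ref{ass:1}, it should ensure that $\Sigma_{T_p}$ is strictly positive definite uniformly on the compact set $\Theta_p$. A secondary, easier point is the interchange of the inner $\min$ and the integral in $s$: since the vector $\alpha(s)$ is allowed to vary freely with $s$, the minimization decouples pointwise, and the resulting $\alpha^*(s)=\Sigma_{T_p}^{-1}c_1(s,T_p)$ is continuous in $s$, which legitimizes the interchange. (If the definition were to be read literally with $q^2$ rather than $q$ inside the integrand, the clean decomposition above breaks, but one can still conclude by observing that under the FCAR(1) structure $q^*(s,T_p)\ge\E[\varepsilon_n(s)^2]$ pointwise, with equality at the true points $T_p^\ast$; hence both $\int(q^*)^2\,\dd s$ and $-Q^0$ are minimized at the same $T_p^\ast$.)
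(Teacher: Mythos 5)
Your proof is correct and reaches the paper's key identity $Q(T_p)=C-Q^0(T_p)$ by the same overall strategy (pointwise minimization of a convex quadratic in $\alpha(s)$, then integration), but via a more elementary route. The paper first passes through Lo\`eve's isometry, rewriting $\var\big(X_n(s)-\sum_j\alpha_j(s)X_{n-1}(t_j)\big)$ as $\Vert\phi(s,\cdot)-\sum_j\alpha_j(s)c_0(t_j,\cdot)\Vert_\calh^2+\sigma(s)$ and then expanding the RKHS norm with the reproducing property; you instead expand the second moment directly in terms of $c_0$, $c_1$ and $\Sigma_{T_p}$. The two constants agree, since $c_0(s,s)=\Vert\phi(s,\cdot)\Vert_\calh^2+\sigma(s)$ by the isometry and the uncorrelatedness of $\varepsilon_n(s)$ with $X_{n-1}$, so nothing is lost; what the paper's RKHS formulation buys is the explicit identification $\int_0^1\Vert\phi(s,\cdot)\Vert_\calh^2\,\dd s$ of the $\phi$-part of the constant (reused later, e.g.\ in the consistency proof for $\wh p_m$, where $Q^0(T^*)=\int_0^1\Vert\phi(s,\cdot)\Vert_\calh^2\,\dd s$) and a direct verification that $C<\infty$ under the stated hypothesis $\E\Vert\varepsilon_n^2\Vert<\infty$; your derivation gets finiteness of $\int_0^1 c_0(s,s)\,\dd s$ for free from $\E[(\sup_s|X_n(s)|)^2]<\infty$. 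Your closing remark about the literal reading of the $q^2$ in Equation \eqref{Q1} is well taken: the paper's own proof also silently works with $\int\min_\alpha q\,\dd s$ rather than $\int(\min_\alpha q)^2\,\dd s$, and your observation that the decomposition breaks under the literal reading is accurate; note only that your proposed patch (pointwise equality at the true points) requires $p\ge p^*$ so that $T^*$ is attainable in $\Theta_p$, whereas the clean decomposition argument works for every $p$.
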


The proof of this result is an extension to the current setting of the proof of Proposition 1 in \cite{berrendero2018} and can be found in Appendix B. 

For the criterion of the $FCAR(q)$ with $q>1$ we have to substitute in this proposition the function $c_1(s,t)$ by the continuous picewise-defined function
\begin{equation}\label{Eq:covZq}
    c(s,t) \ = \ c_i(s,t-q+i) \ \text{ for } \ t\in (q-i,q-i+1] \ \text{ and } \ s\in [0,1].
\end{equation}
We need the additional assumption that the random variables $Z_{n,q}(s)$ for each $s$ are all linearly independent, to ensure the invertibility of the covariance matrices of $Z_{n,q}$ evaluated in $(t_1^{(1)},\ldots,t_{p^{(q)}}^{(q)}) \in [0,q]^p$. This assumption introduces some further restrictions to the model. For instance, the functions $\alpha_j^{(i)}(s)$ can not vanish for $s\in [0,1]$ and $1<i\leq q$.


\subsection{Estimation from the sample}

As mentioned, the optimality criterion defined by $Q^0$ is simple to implement in practice. In this section we study the asymptotic properties of the natural estimator of $Q^0$. These results are useful when studying the asymptotic behavior of the selected points and the estimated trajectories in the next section. We work with a sample $x_1,\ldots,x_m$ of size $m$ drawn from a FCAR(1) process satisfying Assumption \ref{ass:1}. Then, for a given number of points $p$, the natural estimator for the functions $Q^0(T_p)$ is 
\begin{equation}\label{Eq:Qn}
    \wh Q^0_m(T_p) = \int_0^1 \widehat{c}_1(s,T_p)' \ \widehat{\Sigma}_{T_p}^{-1} \ \widehat{c}_1(s,T_p) \mathrm{d}s,
\end{equation}
where $\widehat{c}_1(\cdot,T_p)' = (\widehat{c}_1(\cdot,t_1), \ldots, \widehat{c}_1(\cdot,t_p))$ and $\wh c_1$ is the usual estimator of the covariance function
$$\widehat{c}_1(s,t_j) \ = \ \frac{1}{m-1} \sum_{i=1}^{m-1} x_{i+1}(s)x_i(t_j). $$
The entries of the sample covariance matrix, $\wh c_0(t_i,t_j)$, $t_i,t_j\in T_p$, are computed equivalently. According to this criterion, we propose to select as the most relevant points
\begin{equation}\label{EstT}
\widehat{T}_{p,m} \ = \ \arg\max_{T_p\in\Theta_p} \widehat{Q}_{0,m}(T_p).
\end{equation}

In Section \ref{Sec:Asymp} we prove some consistence results for this estimator, under the assumption that the finite dimensional model defined by Equation \eqref{Eq:SparseX} holds. To this end, we first need convergence results of the sample covariances involved in the expression of $\wh Q^0_m$. The main one is based on a result of \citet{pumo1998}.

\begin{lemma}\label{Lemma:UnifConv}
Assume that $X_n, n\in\Z$ is a FCAR(1) process satisfying Assumption~\ref{ass:1} and:
\begin{itemize}
    \item[H1.] The process $X_n(t)X_n(s)$ for $t,s\in[0,1]$ is uniformly geometrically strong mixing.
    \item[H2.] (Cramer conditions) For every $t,s\in [0,1]$ there exist $d>0$ and $D<\infty$ such that     
    \begin{eqnarray*}
    & \bullet& d \ \leq \ \E [X_0^2(t)X_0^2(s)] \ \leq \ D \hspace*{1cm} \text{and} \hspace*{7cm}\\[2mm]
     &\bullet& \E | X_0(t)X_0(s) |^k \ \leq \ D^{k-2}k! \ \E [X_0^2(t)X_0^2(s)] \ \text{ for } \ k \geq 3.
    \end{eqnarray*}
\end{itemize}
Then,
\begin{align}
    &\sup_{t,s\in[0,1]} \vert {\wh c}_0 (s,t) - c_0 (s,t) \vert \stas 0 \qquad \text{and} \label{UnifConv}\\
    &\sup_{t,s\in[0,1]} \vert {\wh c}_1 (s,t) - c_1 (s,t) \vert \stas 0. \label{UnifConvC1}
\end{align}
\end{lemma}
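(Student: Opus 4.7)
My plan splits the lemma into two parts, one for each displayed convergence.

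For (\ref{UnifConv}) I would simply invoke the uniform strong law of large numbers for empirical covariance functions of $C[0,1]$-valued stationary processes proved by \citet{pumo1998}. Hypothesis H1 supplies exactly the uniform geometric strong mixing of $X_n(t)X_n(s)$ that Pumo's proof uses to obtain a Bernstein-type inequality on a dense grid of $[0,1]^2$, hypothesis H2 provides the Cramer-type moment control needed in that inequality, and Assumption \ref{ass:1} together with Appendix A guarantees the existence of a strictly stationary solution of the FCAR(1) model. With these ingredients Pumo's theorem applies essentially verbatim.

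For (\ref{UnifConvC1}) I would reduce to the first part by exploiting the FCAR(1) structure. Substituting (\ref{Eq:SparseX}) into $\widehat c_1$ gives
\begin{equation*}
\widehat c_1(s,t) \;=\; \sum_{j=1}^p \alpha_j(s)\,\widehat c_0(t_j,t) \;+\; R_m(s,t), \qquad R_m(s,t)=\frac{1}{m-1}\sum_{i=1}^{m-1}\varepsilon_{i+1}(s)\,X_i(t),
\end{equation*}
and taking expectations, using that $\varepsilon_{i+1}$ is pointwise uncorrelated with $X_i$, yields $c_1(s,t)=\sum_j \alpha_j(s)\,c_0(t_j,t)$. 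Subtracting and taking suprema produces
\begin{equation*}
\sup_{s,t}|\widehat c_1(s,t)-c_1(s,t)| \;\le\; \Big(\sum_{j=1}^p\|\alpha_j\|\Big)\sup_{s,t}|\widehat c_0(s,t)-c_0(s,t)| \;+\; \sup_{s,t}|R_m(s,t)|.
\end{equation*}
By Assumption \ref{ass:1} the coefficient $\sum_j\|\alpha_j\|$ is finite (in fact strictly less than one), so the first term vanishes a.s.\ by (\ref{UnifConv}) and the whole problem collapses to showing $\sup_{s,t}|R_m(s,t)|\stas 0$.

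The main obstacle is this residual term. The cleanest route is to view the product $Y_i(s,t):=\varepsilon_{i+1}(s)X_i(t)$ (or, equivalently, $X_{i+1}(s)X_i(t)$ directly) as a stationary $C([0,1]^2)$-valued sequence and apply a two-index version of Pumo's uniform SLLN to it. Two properties must be verified: (i) the pair process $(X_i,\varepsilon_{i+1})$ inherits uniform geometric strong mixing from H1, the mixing coefficients of consecutive pairs being bounded by those of $X_n$ at a lag shifted by one so that the geometric rate is preserved; and (ii) Cramer-type bounds for $Y_i(s,t)$ follow from H2 together with the Cauchy--Schwarz inequality and the identity $\varepsilon_{i+1}=X_{i+1}-\sum_j\alpha_j(\cdot)X_i(t_j)$, which expresses the noise as a bounded linear combination of evaluations of $X$. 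Once these two checks are in place, the same Bernstein-plus-equicontinuity argument covering $[0,1]^2$ by a deterministic grid delivers $\sup_{s,t}|R_m(s,t)|\stas 0$, and the triangle inequality then yields (\ref{UnifConvC1}). The bookkeeping for the mixing transfer is the delicate point; everything else is a direct extension of the one-parameter machinery used in Pumo (1998).
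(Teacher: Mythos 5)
Your treatment of \eqref{UnifConv} matches the paper: the authors also reduce it to Lemma~1 of \cite{pumo1998}, which gives an exponential probability bound for $\sup_{s,t}|\wh c_0(s,t)-c_0(s,t)|$ under H1--H2, and then conclude almost sure convergence by summing the bounds and applying Borel--Cantelli (a small step you gloss over by calling Pumo's result a ``uniform SLLN,'' but nothing is lost there).

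For \eqref{UnifConvC1}, however, you miss the short route and replace it with an argument that has a genuine gap. The paper simply observes that the same Lemma~1 of \cite{pumo1998} also bounds the lagged empirical covariance $\sup_{s,t}|\wh c_1(s,t)-c_1(s,t)|$ directly (with $m-1$ in place of $m$), so no decomposition is needed. Your reduction via \eqref{Eq:SparseX} is algebraically fine, but it collapses the whole problem onto $\sup_{s,t}|R_m(s,t)|\stas 0$, and this is where the argument stalls: a uniform SLLN for the two-parameter array $Y_i(s,t)=\varepsilon_{i+1}(s)X_i(t)$ is essentially as hard as the original claim (indeed $R_m$ is just $\wh c_1$ minus a linear combination of $\wh c_0$-type averages, so you have gained nothing), and you only sketch the two verifications it would require. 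Worse, the Cramer conditions you would need for $Y_i$ are \emph{not} implied by H1--H2: the lower bound $d\le \E[Y_0^2(s,t)]$ fails wherever the innovation degenerates --- e.g.\ for the Ornstein--Uhlenbeck process of Example~\ref{Ex:OU} one has $\varepsilon_n(0)=0$ a.s., so $\E[\varepsilon_{i+1}^2(0)X_i^2(t)]=0$ --- and the transfer of uniform geometric strong mixing from the same-time products $X_n(t)X_n(s)$ of H1 to the lagged pair process is asserted rather than proved. You correctly flag this bookkeeping as ``the delicate point,'' but it is precisely the point that is left open; as written, the proof of \eqref{UnifConvC1} is incomplete, whereas the intended argument disposes of it in one line by citing the lagged-covariance part of Pumo's inequality.
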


\begin{proof}
By Lemma 1 of \cite{pumo1998} we know that for some positive constants $A_1, A_2, A_3$,
\begin{eqnarray*}
\P\left( \sup_{t,s\in[0,1]} \vert {\wh c}_0(s,t) - c_0(s,t) \vert \geq  \varepsilon\right)  & \leq & (2\sqrt{m}+A_1) \ {\rm exp} \left( -A_2 \varepsilon^2\sqrt{m}\right) \\ && + A_3 \varepsilon^{\frac{2}{5}} m \ {\rm exp}\left( -{\rm log}(r^{-1}) \sqrt{m} \right),
\end{eqnarray*}
where $m$ is the sample size and $0<r<1$ is given by assumption H1. By Borel-Cantelli, if the sums over $m$ of these probabilities are finite for every $\varepsilon>0$, we get the almost sure convergence stated in Equation \eqref{UnifConv}. The sum is of order
$$\sum_{m=1}^\infty \P\left( \sup_{t,s\in[0,1]} \vert {\wh c}_0(s,t) - c_0(s,t) \vert \geq \varepsilon\right) \ \sim \  2 \sum_{m=1}^\infty \frac{\sqrt{m}}{{\rm e}^{C_\varepsilon\sqrt{m}}} + \sum_{m=1}^\infty \frac{A_1}{{\rm e}^{C_\varepsilon\sqrt{m}}} + D_\varepsilon \sum_{m=1}^\infty \frac{m}{{\rm e}^{C_r\sqrt{m}}},$$
where $C_\varepsilon, C_r, D_\varepsilon > 0$ and these three series converge, for example by the limit comparison test with $\sum m^{-\gamma}, \gamma>1$. Concerning \eqref{UnifConvC1}, the same Lemma 1 of \cite{pumo1998} states that the bounds for these probabilities are equivalent but with $m-1$ in place of $m$.
\end{proof}

These Cramer conditions appear often in the literature related with limit theorems for AR processes in Banach spaces. For instance, all bounded processes satisfy them, and also the Ornstein-Uhlenbeck process of Example \ref{Ex:OU}. In the latter case, $|X_n(s)X_n(t)| = e^{-k(t+s)}X_{n-1}(1)^{2}$, then,
$e^{-k(t+s)}\E X_0(1)^{2k} \ \leq \ D^{k-2}k! e^{-2(t+s)}\E X_0(1)^4,$
where $X_0(1)$ follows a $\mathcal{N}(0,0.5)$. Using the expression for the moments of a Gaussian variable, 
$$e^{-k(t+s)}\frac{(2k)!}{2^{2k}k!} \ \leq \ D^{k-2}e^{-2(t+s)}\frac{3k!}{4},$$
which is satisfied, for instance, for $D\geq 5e^{-2}/12$.

Given the previous result, we show the uniform convergence of the sample criterion function to its population counterpart and that both functions are continuous on $\Theta_p$.

\begin{lemma}\label{Lemma:ContConv}
 Assume that $X_n$ satisfies the same hypotheses as in Lemma \ref{Lemma:UnifConv}. Let $p\geq 1$ be such that the covariance matrices $\Sigma_{T_p}$ are invertible for all $T_p\in\Theta_p$. Then $Q^0$ and $\widehat{Q}_{0,m}$ are continuous on $\Theta_p$ and \
$\sup_{T_p\in\Theta_p} \vert \widehat{Q}_m^0(T_p) - Q^0(T_p) \vert \ \stas \ 0.$
\end{lemma}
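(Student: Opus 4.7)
My approach is to exploit the uniform convergence of the sample covariances established in Lemma~\ref{Lemma:UnifConv} together with the compactness of $\Theta_p$ and the invertibility hypothesis on $\Sigma_{T_p}$, plus a standard resolvent-type identity for the inverses.

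First I would verify continuity of $Q^0$ on $\Theta_p$. Since $X_n$ has continuous trajectories with $\E[(\sup_s|X_n(s)|)^2]<\infty$, both $c_0$ and $c_1$ are continuous on $[0,1]^2$, so the maps $T_p\mapsto\Sigma_{T_p}$ and $T_p\mapsto c_1(\cdot,T_p)$ are continuous (entrywise). Matrix inversion is continuous on invertible matrices, and, by the hypothesis of invertibility together with the compactness of $\Theta_p$, the smallest eigenvalue of $\Sigma_{T_p}$ is uniformly bounded below by some $\lambda_0>0$ on $\Theta_p$. Hence the integrand $s\mapsto c_1(s,T_p)'\Sigma_{T_p}^{-1}c_1(s,T_p)$ is jointly continuous on $[0,1]\times\Theta_p$ and uniformly bounded there, and dominated convergence yields continuity of $Q^0$. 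Continuity of $\widehat{Q}_{0,m}$ follows by the same argument once we know $\widehat{\Sigma}_{T_p}$ is invertible on $\Theta_p$; this will come out of the uniform convergence step below.

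For the uniform convergence, Lemma~\ref{Lemma:UnifConv} directly gives, using equivalence of norms in finite dimensions,
\[
\sup_{T_p\in\Theta_p}\Vert\widehat{\Sigma}_{T_p}-\Sigma_{T_p}\Vert\stas 0\qquad\text{and}\qquad \sup_{T_p\in\Theta_p,\;s\in[0,1]}\Vert\widehat{c}_1(s,T_p)-c_1(s,T_p)\Vert\stas 0.
\]
By Weyl's inequality, for $m$ sufficiently large almost surely the smallest eigenvalue of $\widehat{\Sigma}_{T_p}$ exceeds $\lambda_0/2$ uniformly in $T_p\in\Theta_p$, so $\widehat{\Sigma}_{T_p}^{-1}$ is well defined with $\Vert\widehat{\Sigma}_{T_p}^{-1}\Vert\leq 2/\lambda_0$. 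The resolvent identity $\widehat{\Sigma}_{T_p}^{-1}-\Sigma_{T_p}^{-1}=\Sigma_{T_p}^{-1}(\Sigma_{T_p}-\widehat{\Sigma}_{T_p})\widehat{\Sigma}_{T_p}^{-1}$ then transfers the uniform convergence to the inverses. Adding and subtracting, the integrand of $\widehat{Q}_{0,m}(T_p)-Q^0(T_p)$ decomposes as
\[
(\widehat{c}_1-c_1)'\widehat{\Sigma}_{T_p}^{-1}\widehat{c}_1 + c_1'(\widehat{\Sigma}_{T_p}^{-1}-\Sigma_{T_p}^{-1})\widehat{c}_1 + c_1'\Sigma_{T_p}^{-1}(\widehat{c}_1-c_1),
\]
where each summand is a product of uniformly bounded factors with at least one factor tending to zero a.s.\ uniformly in $(s,T_p)$. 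Integrating over $s\in[0,1]$ preserves the uniform bound, giving the claim.

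The main obstacle is the uniform control of $\widehat{\Sigma}_{T_p}^{-1}$ over $\Theta_p$ for all sufficiently large $m$; this is precisely what the compactness of $\Theta_p$, the invertibility hypothesis, and Weyl's perturbation bound are designed to provide. Everything else reduces to continuity-of-covariance bookkeeping and a single application of dominated convergence.
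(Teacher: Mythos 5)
Your proof is correct and follows essentially the same route as the paper's: continuity and uniform a.s.\ convergence of the pointwise integrand $c_1(s,T_p)'\Sigma_{T_p}^{-1}c_1(s,T_p)$, followed by integration over $s$. The only difference is that the paper delegates the continuity argument and the uniform convergence of the inverse covariance matrices to Lemmas 2 and 3 of \cite{berrendero2018}, whereas you make those steps explicit via the compactness of $\Theta_p$, the uniform eigenvalue lower bound, Weyl's inequality and the resolvent identity.
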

The proof of this result can be found in Appendix B, and it is partially based on the pointwise properties of the integrands of $Q^0$ and $\wh Q^0_m$ shown in \cite{berrendero2018}.

For the case of greater order $FCAR(q)$ with $q>1$, this result, and therefore all the results of the next section, hold whenever the process $Z_{n,q}$ fulfils assumptions H1 and H2 of Lemma \ref{Lemma:UnifConv}. This is equivalent to suppose that all the products $X_i(t)X_j(s)$ satisfy H1 and H2 for $0\leq i,j \leq q-1$.


\section{Asymptotic results}\label{Sec:Asymp}

In the previous section we introduced the variable selection method in a general context. We have presented an optimality criterion that can be used to select the $p$ most relevant points, without imposing additional restrictions to the model that generates the data. If we assume that the data is generated by the finite dimensional model of Equation \eqref{Eq:SparseX}, additional asymptotic results for the estimator $\wh T_{p,m}$ and for the estimated curves can be derived. Therefore, in this section we assume that the kernel depends only on $p^*$ points:
\begin{equation}\label{Eq:Sparse}
    \phi(s,\cdot) \ = \ \sum_{j=1}^{p^*} \alpha_j(s) c_0(t_j^*,\cdot),
\end{equation}
for all $s\in[0,1]$, where $p^*$ is the minimum integer such that this expression holds. We denote by $T^* = T_{p^*}^* \in \Theta_{p^*}$ the set of points that generate the model. Two questions arise; how good is our estimator $\wh T_{p^*,m}$ when searching the real points $T^*$, and how can we estimate the real number of points $p^*$.


\subsection{Estimated points and trajectories}

From the expression of $Q$ given in Equation \eqref{Q1}, one sees that given \eqref{Eq:Sparse}, the set $T^*$ is a global minimum of $Q$ on $\Theta_{p^*}$, and therefore a global maximum of $Q^0$. Assuming for now that $p^*$ is known, we can prove that this optimum is unique and the estimated points ${\wh T}_{p^*,m}$ converge to the real ones $T^*$. This result is an extension of Theorem 1 of \cite{berrendero2018} and its proof is included in Appendix B.

\begin{theorem}\label{Theo:Ts}
Under the assumptions of Lemma \ref{Lemma:ContConv} for $p=p^*$, whenever \eqref{Eq:Sparse} holds and the covariance matrices $\Sigma_{T_{p^*}\cup S_{p^*}}$ are invertible for all $T_{p^*},S_{p^*}\in \Theta_{p^*}$ with $T_{p^*}\neq S_{p^*}$, then:
\begin{enumerate}
\item[(a)] The vector $T^*\in\Theta_{p^*}$ is the only global maximum of $Q^0$ on this space.
\item[(b)] $\widehat{T}_{p^*,m} \overset{a.s.}{\rightarrow} T^*$ with the sample size $m\to\infty$, where $\widehat{T}_{p^*,m}$ is given in Eq.\eqref{EstT} with $p=p^*$.
\item[(c)] $\widehat{T}_{p^*,m}$ converges to $T^*$ in quadratic mean when $m\to\infty$.
\end{enumerate}
\end{theorem}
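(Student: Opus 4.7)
The strategy is to leverage Proposition \ref{Prop:Equiv} so I can freely switch between minimizing $Q$ and maximizing $Q^0$, and then deploy the standard Wald-type argmax-consistency machinery once uniqueness of the population maximizer has been established. For part (a), I would first verify that $T^*$ attains the lower envelope of $Q$ on $\Theta_{p^*}$. Since under \eqref{Eq:Sparse} the innovation $\varepsilon_n$ is point-wisely uncorrelated with $X_{n-1}$, for any linear combination $L$ of evaluations of $X_{n-1}$ the orthogonality decomposition
\[
\E\bigl[(X_n(s)-L)^{2}\bigr]=\E\Bigl[\Bigl(\sum_{j=1}^{p^*}\alpha_j(s)X_{n-1}(t_j^*)-L\Bigr)^{2}\Bigr]+\E[\varepsilon_n(s)^{2}]
\]
yields the pointwise bound $q^{2}(T_{p^*};\alpha)(s)\geq \E[\varepsilon_n(s)^{2}]$, with equality at $T^*$, $\alpha=\alpha^*$. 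Integrating over $s$, $T^*$ minimizes $Q$ and therefore maximizes $Q^0$.

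Uniqueness is the key delicate point. Suppose $S\in\Theta_{p^*}$, $S\neq T^*$, also maximizes $Q^0$. Integrating the pointwise lower bound and using continuity in $s$ of $c_0,c_1$ (hence of the integrands), the minimum must be attained pointwise, so for each $s$ there exist $\tilde\alpha(s)$ with $X_n(s)-\sum_j\tilde\alpha_j(s)X_{n-1}(s_j)=\varepsilon_n(s)$ in $L^{2}$. Setting $V=T^*\cup S$ and invoking invertibility of $\Sigma_V$, the best linear predictor of $X_n(s)$ built from $\{X_{n-1}(t)\}_{t\in V}$ has uniquely determined coefficients; the $T^*$-supported and $S$-supported optimal combinations both realize it, so the common support lies in $T^*\cap S$. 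Translating through \eqref{eq:isomfin} this rewrites $\phi(s,\cdot)$ using fewer than $p^*$ kernels $c_0(t_j,\cdot)$, contradicting the minimality in \eqref{Eq:Sparse}.

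Part (b) then follows from textbook argmax consistency: $\Theta_{p^*}$ is compact, $Q^0$ is continuous with unique maximizer $T^*$ by (a), and $\wh Q^0_m$ converges uniformly to $Q^0$ a.s.\ on $\Theta_{p^*}$ by Lemma \ref{Lemma:ContConv}. Any a.s.\ subsequential limit of $\wh T_{p^*,m}$ is therefore a maximizer of $Q^0$ and equal to $T^*$, so the full sequence converges a.s. For part (c), $\wh T_{p^*,m},T^*\in[0,1]^{p^*}$ gives the deterministic envelope $\Vert \wh T_{p^*,m}-T^*\Vert^{2}\leq p^*$; combined with the a.s.\ convergence of (b), dominated convergence delivers quadratic-mean convergence.

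The hard part will be the uniqueness step in (a), which is where the full strength of the assumption on $\Sigma_{T_{p^*}\cup S_{p^*}}$ gets used. Invertibility of $\Sigma_{T^*}$ alone is insufficient, because different candidate sets $S\neq T^*$ produce a priori different coefficient supports; one genuinely has to pass to the enlarged covariate vector indexed by $V=T^*\cup S$ and combine uniqueness of the unconstrained projection with the minimality built into the definition of $p^*$. The remaining pieces transplant almost verbatim from the independent-data argument in \cite{berrendero2018}, essentially by integrating the scalar identities over $s\in[0,1]$ and appealing to Proposition \ref{Prop:Equiv} and Lemma \ref{Lemma:ContConv}.
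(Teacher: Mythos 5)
Your proof is correct and follows essentially the same route as the paper's: reduce to minimizing $Q$ via Proposition~\ref{Prop:Equiv}, use the orthogonality decomposition to identify the minimum value $\Vert\sigma\Vert_2$, derive uniqueness from the invertibility of $\Sigma_{T^*\cup S^*}$, and obtain (b) by standard argmax consistency from Lemma~\ref{Lemma:ContConv} and (c) from boundedness of $\Theta_{p^*}$ plus dominated convergence. If anything, your uniqueness step is slightly more complete than the paper's, since you explicitly invoke the minimality of $p^*$ to exclude the degenerate case in which both optimal combinations are supported on $T^*\cap S$ --- a point the paper's terse ``contradicts invertibility'' glosses over.
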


Once that we have selected the most relevant points from the sample, we want to estimate the trajectories of the process. That is, we want to approximate 
\begin{equation}\label{Eq:Xreal}
    X_{n}(\cdot)_{T^*} \ = \ \alpha_1(\cdot)X_{n-1}(t_1^*) + \ldots + \alpha_{p^*}(\cdot)X_{n-1}(t_{p^*}^*).
\end{equation}
In the proof of Proposition \ref{Prop:Equiv} we have seen that the functions $({\alpha}_1(\cdot),\ldots,{\alpha}_{p^*}(\cdot))'$ used to carry out this projection are given by ${\Sigma}_{T_{p^*}}^{-1} ({c_1}(\cdot,t_1^*),\ldots,{c_1}(\cdot,t_{p^*}^*))$. Therefore, we can construct the estimated curves as
$\widehat{X}_{n}(\cdot)_{\widehat{T}_{p^*,m}}$ as $\widehat{\alpha}_1(\cdot)X_{n-1}(\widehat{t}_1) + \ldots + \widehat{\alpha}_{p^*}(\cdot)X_{n-1}(\widehat{t}_{p^*}),$
where now the functions $({\wh \alpha}_1(\cdot),\ldots,{\wh \alpha}_{p^*}(\cdot))'$ are computed using the sample version of the covariances as $\widehat{\Sigma}_{\wh T_{p^*},m}^{-1} (\widehat{c_1}(\cdot,\widehat{t}_1),\ldots,\widehat{c_1}(\cdot,\widehat{t}_{p^*}))$. Thus our proposed estimator for $X_{n}(\cdot)_{T^*}$ is
\begin{equation}\label{Eq:Resp}
\widehat{X}_{n}(\cdot)_{\widehat{T}_{p^*,m}} \ = \ {\wh c}_1(\cdot,\wh T_{p^*,m})' \, \widehat{\Sigma}_{\wh T_{p^*,m}}^{-1}X_{n-1}({\wh T_{p^*,m}}).
\end{equation}
Under the same conditions of the previous theorem, we can see that this estimator converges to $X_{n}(\cdot)_{T^*}$ uniformly a.s. and in quadratic mean.

\begin{theorem}\label{Theo:Resp}
Under the same assumptions of Theorem \ref{Theo:Ts} and when the sample size $m\to\infty$,
\begin{enumerate}
\item[(a)] $\widehat{X}_{n}(\cdot)_{\widehat{T}_{p^*,m}}$ converges to $X_{n}(\cdot)_{T^*}$ a.s. in $C[0,1]$: \ $\sup_s\big\vert \widehat{X}_{n}(s)_{\widehat{T}_{p^*,m}} - X_{n}(s)_{T^*}\big\vert \stas 0$.
\item[(b)] If, in addition, there exists $\eta>1$ such that $\E \Vert \vert X_n \vert^{2\eta} \Vert<\infty$, then one also gets $\E[(\sup_s \vert \widehat{X}_{n}(s)_{\widehat{T}_{p^*,m}} - X_{n}(s)_{T^*}\vert)^2] \to 0$. 
\end{enumerate}
\end{theorem}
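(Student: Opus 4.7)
The plan for part (a) is a clean decomposition-and-convergence argument. Setting $\widehat{\alpha}_m(s) := \widehat{\Sigma}_{\widehat{T}_{p^*,m}}^{-1}\widehat{c}_1(s,\widehat{T}_{p^*,m})$ and $\alpha^*(s) := \Sigma_{T^*}^{-1}c_1(s,T^*)$, I would first write
$$\widehat{X}_n(s)_{\widehat{T}_{p^*,m}} - X_n(s)_{T^*} = \bigl(\widehat{\alpha}_m(s) - \alpha^*(s)\bigr)'X_{n-1}(T^*) + \widehat{\alpha}_m(s)'\bigl(X_{n-1}(\widehat{T}_{p^*,m}) - X_{n-1}(T^*)\bigr).$$
Taking $\sup_s$ and using the triangle inequality, the task reduces to three ingredients: (i) $\sup_s\|\widehat{\alpha}_m(s)-\alpha^*(s)\|_1\stas 0$, obtained by combining $\widehat{T}_{p^*,m}\stas T^*$ from Theorem~\ref{Theo:Ts}(b) with the uniform a.s.\ convergences of $\widehat{c}_1$ and $\widehat{c}_0$ from Lemma~\ref{Lemma:UnifConv}, the joint continuity of $c_0,c_1$, and continuity of matrix inversion at the invertible matrix $\Sigma_{T^*}$; (ii) $\|X_{n-1}(\widehat{T}_{p^*,m})-X_{n-1}(T^*)\|_\infty\stas 0$, using the a.s.\ uniform continuity of the sample path $X_{n-1}\in C[0,1]$ on the compact $[0,1]$; and (iii) the a.s.\ finiteness of $\sup_s\|\widehat{\alpha}_m(s)\|_1$ (a consequence of (i)) and of $\|X_{n-1}\|$. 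Plugging these into the decomposition gives a.s.\ uniform convergence in $C[0,1]$.

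For part (b), writing $Y_m := \sup_s|\widehat{X}_n(s)_{\widehat{T}_{p^*,m}} - X_n(s)_{T^*}|$, the plan is to upgrade the $Y_m\stas 0$ from part (a) to $\E Y_m^2\to 0$ via a uniform-integrability argument that exploits the moment gap $\eta>1$. I would introduce the event
$$A_m := \bigl\{\lambda_{\min}(\widehat{\Sigma}_{\widehat{T}_{p^*,m}})\geq \tfrac{1}{2}\lambda_{\min}(\Sigma_{T^*})\bigr\},$$
which satisfies $\P(A_m^c)\to 0$ (by Theorem~\ref{Theo:Ts}(b), Lemma~\ref{Lemma:UnifConv}, and continuity of $\lambda_{\min}$), and on which one has the deterministic bound $\|\widehat{\Sigma}_{\widehat{T}_{p^*,m}}^{-1}\|_{op}\leq 2/\lambda_{\min}(\Sigma_{T^*})$. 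This gives the bound $Y_m\mathbf{1}_{A_m}\leq C\|X_{n-1}\|$ for a non-random $C$, whence dominated convergence yields $\E[Y_m^2\mathbf{1}_{A_m}]\to 0$, since $\E\|X_{n-1}\|^2\leq (\E\|X_{n-1}\|^{2\eta})^{1/\eta}<\infty$.

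The main obstacle is the tail contribution $\E[Y_m^2\mathbf{1}_{A_m^c}]$, where the inverse covariance matrix is not uniformly controlled. I plan to attack it with a H\"older split,
$$\E[Y_m^2\mathbf{1}_{A_m^c}]\leq (\E Y_m^{2\eta})^{1/\eta}\,\P(A_m^c)^{(\eta-1)/\eta},$$
and to leverage the exponential-type concentration rate for $\widehat{c}_0$ visible in the proof of Lemma~\ref{Lemma:UnifConv}: this makes $\P(A_m^c)$ decay faster than any polynomial in $m$, so that it absorbs any polynomial-in-$m$ growth of $\E Y_m^{2\eta}$ coming from the crude bound $|\widehat{X}_n(s)|\leq p^*\|X_{n-1}\|\|\widehat{\Sigma}_{\widehat{T}_{p^*,m}}^{-1}\|_{op}\sup_{s,t}|\widehat{c}_1(s,t)|$, together with condition H2 and the hypothesis $\E\|X_n\|^{2\eta}<\infty$. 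Adding the on- and off-$A_m$ contributions yields $\E Y_m^2\to 0$.
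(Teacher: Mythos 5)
Your part (a) is essentially the paper's argument: the paper splits $\widehat{X}_n - X_n(\cdot)_{T^*}$ by adding and subtracting $c_1(\cdot,T^*)'\widehat{\Sigma}_{\widehat{T}_{p^*,m}}^{-1}X_{n-1}(\widehat{T}_{p^*,m})$ rather than grouping the coefficient vectors as you do, but the ingredients are identical (Lemma~\ref{Lemma:UnifConv}, Theorem~\ref{Theo:Ts}(b), continuity of $c_1$, of matrix inversion at $\Sigma_{T^*}$, and of the sample path), and both decompositions close the argument.

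Part (b) is where you genuinely diverge from the paper, and your route has a gap. The paper upgrades the a.s.\ convergence via Vitali's theorem, verifying uniform integrability through de la Vall\'ee-Poussin by bounding $\sup_m\E[Y_m^{2\eta}]$, where the key step is that $\sup_{s}\Vert\widehat{c}_1(s,\widehat{T}_{p^*,m})'\widehat{\Sigma}_{\widehat{T}_{p^*,m}}^{-1}\Vert_{\R^{p^*}}$ is a.s.\ eventually dominated by the supremum of its continuous population counterpart over the compact $[0,1]\times\Theta_{p^*}$ plus $\epsilon$. Your truncation-plus-H\"older scheme instead requires a bound on the \emph{unrestricted} moment $\E[Y_m^{2\eta}]$, and this is exactly where the argument breaks: on $A_m^c$ there is no lower bound on $\lambda_{\min}(\widehat{\Sigma}_{\widehat{T}_{p^*,m}})$, so $\Vert\widehat{\Sigma}_{\widehat{T}_{p^*,m}}^{-1}\Vert_{op}$ is uncontrolled and nothing in the assumptions guarantees that $\E[Y_m^{2\eta}]$ is polynomially bounded in $m$ --- or even finite. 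The whole purpose of introducing $A_m$ was to avoid taking moments of the inverse sample covariance matrix near singularity, but the H\"older split over the full probability space reintroduces precisely that quantity; the ``crude bound'' you invoke contains the factor $\Vert\widehat{\Sigma}_{\widehat{T}_{p^*,m}}^{-1}\Vert_{op}$, whose $2\eta$-th moment is not available. A secondary issue: the claim $Y_m\mathbf{1}_{A_m}\leq C\Vert X_{n-1}\Vert$ with non-random $C$ does not follow from controlling $\lambda_{\min}(\widehat{\Sigma}_{\widehat{T}_{p^*,m}})$ alone, since $\widehat{X}_n$ also contains the random factor $\sup_{s,t}\vert\widehat{c}_1(s,t)\vert$, which depends on the whole sample; you would need to enlarge $A_m$ to include, e.g., $\{\sup_{s,t}\vert\widehat{c}_1(s,t)-c_1(s,t)\vert\leq 1\}$ (which is harmless, as this event also has probability tending to one). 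The cleanest repair is to abandon the H\"older step and follow the paper: show $Y_m^{2}$ is uniformly integrable by bounding $\E[Y_m^{2\eta}]$ directly through the a.s.\ uniform convergence of $\widehat{c}_1(\cdot,\cdot)'\widehat{\Sigma}_{\cdot}^{-1}$ to its continuous population limit on $[0,1]\times\Theta_{p^*}$, together with $\E\Vert\vert X_{n}\vert^{2\eta}\Vert<\infty$.
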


The proof of this theorem is based on the one of Theorem 2 of \cite{berrendero2018} and can be found in Appendix B. In fact, as shown in Proposition 2 of \citet{mokhtari2003}, if the true kernel of the model is as in Equation \eqref{Eq:SparseGeneral}, the best linear predictor based on $X_{n-1}(t_1),\ldots,X_{n-1}(t_p)$ is the best probabilistic predictor of $(X_n-\varepsilon_n)$.


\subsection{Number of relevant points}

We are left with deriving an estimator for $p^*$, the number of points to select. Notice that the optimality criterion (Equation \eqref{Q1}) does not reach its minimum value for $p<p^*$. Additionally, there is no room for improvement using $p>p^*$.
Therefore, the number $p^*$ would be the smallest $p$ such that the minimum value of $Q(T_p)$ (or the maximum of $Q^0$) remains unchanged when increasing $p$.

The sample version of this idea would be as follows. Defining 
$$\Delta = \min_{p<p^*} (Q^0(T_{p+1}^*)-Q^0(T_p^*))>0$$ 
and fixing some $0 < \epsilon < \Delta$, we set
\begin{equation}\label{Eq:EstP}
\widehat{p}_m \ = \ \min\Big\{ \, p \, : \, \max_{T_{p+1}\in\Theta_{p+1}} \big\{\widehat{Q}_m^0(T_{p+1})\big\} - \max_{T_p\in\Theta_p} \big\{\widehat{Q}_m^0(T_p) \big\} < \epsilon \, \Big\}.
\end{equation}
The value $\Delta$ is merely theoretical, and should be understood as taking an $\epsilon$ small enough. The issue of choosing the threshold $\epsilon$ in practice is tackled in the following section.

This estimator is a.s. consistent for the real number of relevant variables.

\begin{theorem}
Suppose that assumptions of Lemma \ref{Lemma:ContConv} hold for $p\leq p^*$ and that $p^*$ is the smallest integer such that Equation \eqref{Eq:Sparse} is satisfied. Then the estimator given by Equation \eqref{Eq:EstP} fulfills $\widehat{p}_m\stas p^*$.
\end{theorem}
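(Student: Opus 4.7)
The strategy is to upgrade the uniform convergence from Lemma \ref{Lemma:ContConv} to convergence of the maxima $\wh M_{p,m}:=\max_{T_p\in\Theta_p}\wh Q_m^0(T_p)$ toward their population analogues $M_p:=\max_{T_p\in\Theta_p} Q^0(T_p)$, and then to identify a single ``jump'' of the sequence $\{M_p\}$ at the true sparsity level $p^*$. The elementary bound
$$|\wh M_{p,m}-M_p|\ \le\ \sup_{T_p\in\Theta_p}\bigl|\wh Q_m^0(T_p)-Q^0(T_p)\bigr|$$
combined with Lemma \ref{Lemma:ContConv} delivers $\wh M_{p,m}\stas M_p$ for every $p$ for which the hypotheses apply; continuity of $Q^0$ and $\wh Q_m^0$ on the compact set $\Theta_p$ guarantees that these maxima are attained.

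Next I would analyse the shape of $p\mapsto M_p$ under the sparse representation \eqref{Eq:Sparse}. Enlarging $T_p$ can only enrich the class of linear predictors, so $M_p$ is non-decreasing in $p$. Moreover, since the model is exactly fit at $T^*\in\Theta_{p^*}$ and $\varepsilon_n$ is uncorrelated with $X_{n-1}$, no further reduction of the residual variance $Q(T_p)$ is possible once the $p^*$ ``true'' points are included; by Proposition \ref{Prop:Equiv} this forces $M_p=M_{p^*}$ for every $p\ge p^*$, so in particular $M_{p^*+1}-M_{p^*}=0$. On the other hand, the very definition of $\Delta$ together with the assumption $0<\epsilon<\Delta$ gives $M_{p+1}-M_p\ge\Delta>\epsilon$ for every $p<p^*$.

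Combining the two steps, for each of the finitely many indices $p\in\{1,\ldots,p^*\}$ the sample difference $\wh M_{p+1,m}-\wh M_{p,m}$ converges almost surely to $M_{p+1}-M_p$. Intersecting these finitely many full-probability events, there exists almost surely an $m_0$ such that for all $m\ge m_0$: $\wh M_{p+1,m}-\wh M_{p,m}>\epsilon$ for every $p<p^*$ and $\wh M_{p^*+1,m}-\wh M_{p^*,m}<\epsilon$. Plugged into the definition \eqref{Eq:EstP}, this says that $p^*$ lies in the set over which the minimum is taken while no smaller index does, hence $\wh p_m=p^*$ for every $m\ge m_0$, which is the claimed almost-sure convergence.

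The only delicate point I expect is the need for Lemma \ref{Lemma:ContConv} at $p=p^*+1$: the estimator \eqref{Eq:EstP} compares dimensions $p$ and $p+1$, so controlling $\wh M_{p^*+1,m}$ demands the uniform consistency of $\wh Q_m^0$ and invertibility of $\Sigma_{T_{p^*+1}}$ on $\Theta_{p^*+1}$; this is a mild regularity condition that has to be read into the standing hypotheses or inherited from the structure of the covariance kernel. Beyond that, the proof is essentially an argmax-continuity argument paired with the strictly positive population gap $\Delta$.
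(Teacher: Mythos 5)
Your proof is correct and follows essentially the same route the paper takes, which is simply to invoke the argument of Lemma 4 of \cite{berrendero2018}: almost-sure convergence of the maxima $\wh M_{p,m}$ via the uniform consistency of Lemma \ref{Lemma:ContConv}, combined with the population gap $\Delta>\epsilon$ below $p^*$ and the stabilization $M_p=M_{p^*}$ at and above $p^*$. Your observation that controlling $\wh M_{p^*+1,m}$ requires the hypotheses of Lemma \ref{Lemma:ContConv} to hold up to $p^*+1$ (not just $p\le p^*$ as literally stated) is a fair point that the paper glosses over.
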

\begin{proof}
We can prove similar results as the ones given in Lemma 4 of \cite{berrendero2018} using the same reasoning, with the only difference that now $Q^0(T^*) = \int_0^1 \Vert \phi(s,\cdot) \Vert_{\calh}^2 \dd s$ (as in the proof of Proposition \ref{Prop:Equiv}).
\end{proof}


\section{Experimental setting}\label{Sec:ExpSetting}

In this section we introduce the data sets which appear along the experiments (both simulated and real), as well as other methods of the literature used for comparison. We start making a couple of theoretical comments that ease the implementation of the method. A general pseudo-code is provided in Appendix C.

\subsection{Practical considerations}

\textit{Greedy approximation}

In order to obtain the most relevant points in practice, we maximize the expression of $\wh Q^0_m$ (given in Equation \eqref{Eq:Qn}) in the $p$-dimensional space $\Theta_p$. However, due to computational limitations, this optimization is not feasible even for relatively small values of $p$. Therefore, a greedy approximation is carried out. We can decompose the function $Q^0$ in a way that directly suggests an iterative approximation to this optimization problem. If the vector $T_{p+1}\in\theta_{p+1}$ is such that it contains all the entries of $T_p$ plus a new one $t_{p+1}\in [0,1]$, using Equation (16) of \cite{berrendero2018} we can write,
\begin{eqnarray*}
Q^0(T_{p+1}) & = & \int_0^1 c_1(s,T_{p+1})' \ \Sigma_{T_{p+1}}^{-1} \ c_1(s,T_{p+1}) \ \dd s \\
& = & Q^0(T_p) \ + \  \frac{\int_0^1 \cov\big(X_n(s) - X_n(s)_{T_p}, X_{n-1}(t_{p+1})\big)^2 \ \dd s}{\var\big(X_{n-1}(t_{p+1}),X_{n-1}(t_{p+1})_{T_p}\big)},
\end{eqnarray*}
where $X_n(\cdot)_{T_p}$ and $X_{n-1}(\cdot)_{T_p}$ are the same kind of projections defined in Equation \eqref{Eq:Xreal}. For each fixed value $s\in[0,1]$, the integrand of this quotient recalls the classical multivariate ``forward selection'' method. The crucial difference here is that the variables are not selected among a fixed finite set but among the whole interval $[0,1]$. This derivation can be also done using the sample counterpart of $Q^0$,
$$\wh Q^0_m (T_{p+1}) \ = \ \wh Q^0_m(T_p) + \frac{\int_0^1 \wh\cov \big(X_n(s) - X_n(s)_{T_p}, X_{n-1}(t_{p+1})\big)^2 \ \dd s}{\wh\var\big(X_{n-1}(t_{p+1}),X_{n-1}(t_{p+1})_{T_p}\big)}.$$
Then the proposed algorithm selects at each step the point $t_{p+1}$ that maximizes this quotient. The starting point would be the one that maximizes $\wh Q^0_m(t) = \wh c_0(t,t)^{-1} \int_0^1 \wh c_1(s,t)^2 \dd s$. As usual when dealing with greedy algorithms, this approximation does not guaranty that the global maximum of $\wh Q^0_m$ is reached. However it performs well in practice. 

In order to compute this quotient we can use a similar reasoning as in the proof of Proposition 2 of \cite{berrendero2018} and rewrite the previous equation as,
\begin{equation}\label{Eq:QnIterM}
    \wh Q^0_m (T_{p+1}) \ = \ \wh Q^0_m(T_p) \ + \ \frac{\int_0^1  \big(\wh c_1(s,T_p)' \ \wh\Sigma_{T_p}^{-1} \ \wh c_0(t_{p+1},T_p) - \wh c_1(s,t_{p+1})\big)^2 \ \dd s}{\wh c_0(t_{p+1},t_{p+1}) - \wh c_0(t_{p+1},T_p)' \ \wh\Sigma_{T_p}^{-1} \ \wh c_0(t_{p+1},T_p)},
\end{equation}
where $\wh c_1(s,T_p)$ is the vector whose entries are given by the sample covariances $\wh\cov (X_n(s),$ $X_{n-1}(t_j))$, and equivalently for $\wh c_0$. 

\

\textit{Grid and covariance matrices}

Due to computational limitations, the search of the most relevant points should be done on a grid of $[0,1]$. If the data is given in a discretized fashion, then the grid is directly given by the data. However, if it is fully functional, the grid can be defined (theoretically) arbitrarily fine. Under the assumption that all the covariance matrices $\Sigma_{T_p}$ are invertible for $T_p$ in this grid, the quotient of Equation \eqref{Eq:QnIterM} is easy to compute. In addition, depending on the nature of the data, the estimations of the covariances $\wh c_1(\cdot,T_p)$ can be made fully functional or using the values on the fixed grid. Both possibilities are implemented for the numerical study.

However, for some real data sets the condition of the invertibility of $\Sigma_{T_p}$ may not be satisfied. For instance, if the curves are represented using a Fourier basis, this condition is rarely satisfied, since periodicity is introduced on the data. Thus, other representations should be used, like splines. If the data is in any case not invertible, it can be always preprocessed to remove the conflictive points of the grid. It would not affect the efficiency of the method, since these points would be linearly dependent of the others, so their information would be redundant. In addition, if the data set contains outliers, there exist several well-known methods to robustly estimate the covariance matrices. Furthermore, if we have some additional information about the covariance structure of the data, it can be directly incorporated to the method.

For the model of order greater than one, we substitute the sample lagged covariance function $\wh c_1(s,t)$ by the sample version of the picewise-defined function of Equation \eqref{Eq:covZq}.

\

\textit{Number of selected points}

In order to apply the estimator of Equation \eqref{Eq:EstP}, some value for $\epsilon$ must be fixed. In other words, it should be determined for which value of $p$ the quotient of Equation \eqref{Eq:QnIterM} has converged to zero. The standard approach to this problem is to fix the parameter by cross-validation. Additionally we test the proposal given in  \cite{berrendero2018}; to apply the usual k-means with $k=2$ to the logarithms of the values of the quotient in Equation \eqref{Eq:QnIterM}. If we denote by $L_m(p)$ the values of these logarithms, $\wh p_m$ would be the minimum $p$ such that all the $L_m(p)$ for $p>\wh p_m$ do not belong to the same cluster as $L_m(1)$. This is equivalent to fix the parameter $\epsilon$ of Equation \eqref{Eq:EstP} to $L_m(q)$ where $q$ is the largest value such that $L_m(q)$ belongs to the same cluster as $L_m(1)$.


\subsection{Methodology}

We compare the efficiency of the proposal with two other recent methods. Both of them carry out the dimension reduction using functional principal components. For the forecasting experiments we also compare with two ``base'' methods that do not reduce dimension. We indicate in brackets the names used in the tables for each method, which are included in Appendix C.

\begin{itemize}
    \item The method proposed in this paper (RKHS) has been implemented in four different ways. As mentioned in the previous section, we use two approaches to select the number of relevant variables; doing clustering on the maximum values of the $\wh Q^0_m$ functions (CL) and by cross-validation (CV). In addition, the points can be selected by using covariance vectors on a grid or computing purely functional lagged-covariance functions. We use one or the other depending on the nature of the data. 
    \item The method proposed in \citet{aue} (fFPE). This proposal uses a dimension reduction method based on functional principal components analysis to find a finite dimensional space on which the prediction is performed using a vector autoregressive model. The model order and dimension of the finite dimensional space are chosen by the fFPE criterion. For details, see \citet{aue}, where the empirical properties of the approach are demonstrated in depth. The \textsf{R}-code of this method was provided by the authors.
    \item The method proposed in \citet{bosq} and \citet{kokoreim} (KR). This prediction method by Bosq is the one known as the standard prediction method for functional autoregressive processes. To determine the order of the functional autoregressive model to be fitted, we use the multiple testing procedure of \citet{kokoreim}.
    \item Exact and Naive methods are implemented in order to provide some bounds on the errors. These methods are also used, for instance, in \citet{horvath2012}. The exact prediction consists in ``predicting'' the response directly as $\rho (x_{n-1})$. Therefore, it can be only applied for simulated data, since the operator $\rho$ is unknown for real data sets. It is not really a prediction method but gives us an idea of the minimum error that we can achieve. The Naive approach simply predicts $\wh x_n$ as $x_{n-1}$.
\end{itemize}

Both the maximum number of points to select and the number of principal components are always limited to 10. For the simulated data all methods are tested using a sample of size $n=115$, where 100 realizations are used for training and the remaining 15 for test. Each experiment has been replicated 100 times. For the real data sets we use a window moving approach with five blocks to obtain several measures of the errors. The size of the windows is adjusted depending on the sample size of each set. The order of the process is always limited to 3 for the methods. However, for our proposal we have to set it to order 1 whenever the curves can not be interpreted as $x_n(s)=z(s+n)$, with $z$ continuous.

Usually the functional data sets are given in a discretized fashion. Some of the tested methods require to transform previously the data to truly functional. However, our discrete proposal can also deal with discretized data. In addition, when the data is irregular, some information could be lost when transforming the data to functional. This complicates the comparison between the different methods. Therefore, for this kind of discretized data sets we measure two different types of errors.
\begin{itemize}
    \item Discrete errors: The error is measured using the original discretized data. The discrete version of the proposal (the one that uses covariance vectors) is tested. The predictions returned by the methods that use fully functional data are evaluated on the same grid given by the data.
    \item Functional errors: The data are transformed to functions using a Bsplines basis before applying the methods. We have found that using Bsplines is more suitable in this setting, since the standard Fourier basis introduces periodicity in the data. For the displayed results 10 functions of the basis are used, but different numbers have been tested without significant changes. The functional implementations of the RKHS proposal, which estimate the whole lagged-covariance functions in a purely functional way, are tested now.
\end{itemize}
As we see in the following subsection, one of the simulated sets is purely functional. In this case only the functional errors are measured. Two different norms are used to measure the error: the standard $L^2[0,1]$ norm and the supremum norm of $C[0,1]$ that has been used along the paper. Each of these norms measure different characteristics of the predictions. We also measure two different relative errors, given a sample of curves $x_1,\ldots,x_m$,
\begin{equation}\label{Eq:Error}
e_1 = \sum_{i=1}^m \frac{\Vert x_i - \wh x_i \Vert}{\Vert x_i \Vert}, \hspace*{1cm} e_2 = \frac{\sum_{i=1}^m \Vert x_i - \wh x_i \Vert}{\sum_{i=1}^m\Vert x_i \Vert}.
\end{equation}
The first one gives the same importance to all curves regardless of their norm, while the second one place more importance to the errors in the curves of biggest norms, since it is just a scaling of the absolute error.


\subsection{Simulated data}

We test the different methods using simulated sets that fulfil the sparsity assumption of Equation \eqref{Eq:Sparse} as well as some which not. Most of them are inspired by other data sets used in the literature. Some realizations of these processes can be found in Figure \ref{Fig:TraySim}.

\begin{itemize}
    \item Two data sets satisfying the sparsity assumption with standard Brownian innovations. The real points are $T^*=(0.3,0.5,0.9)$ with two different sets of functions $\alpha_j$. The first ones are logarithms, $\log((1+s)j^{-1})$ for $j=1,2,3$ and $s\in[0,1]$, similar to the function used for the simulated data in \citet{berrendero2018}. The second set of functions is $sin(30\pi j^{-1}s)$, since we also want to incorporate a data set with high variation. When transforming this last data set to purely functional, 30 Bspline functions are used instead of 10, to be able to capture most of the variation.
    \item Ornstein-Uhlenbeck process introduced in Example \ref{Ex:OU}. This is the only simulated set for which $X_n(s)=Z(s+n)$, so that we can use the model $FCAR(3)$. 
    \item FAR process with linearly decaying eigenvalues of the covariance operator ($s=1,\ldots,15$). Following the simulation example used in \citet{aue}, this set consists of spanning a $D$-dimensional space by the first $D$ Fourier basis functions, and then generate random $D\times D$ parameter matrices and a $D$-dimensional noise process, where the construction ensures a linear decay of the eigenvalues of the covariance operator. The slow decay of these eigenvalues makes sure that problems with PCA based methods due to non-invertibility of the covariance operator are avoided. In this example only the functional errors are measured since it is purely functional by construction. 
\end{itemize}

\begin{figure}
\centering
\begin{subfigure}[b]{0.24\textwidth}
	\centering
    \includegraphics[scale=0.22]{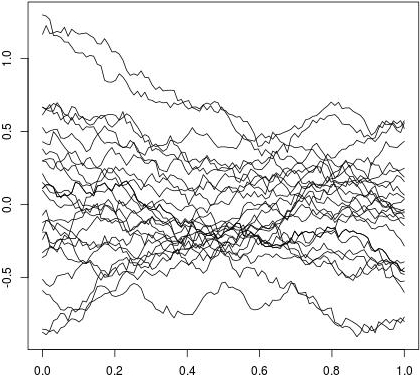}
    \caption{\footnotesize Sparse with logs.}
\end{subfigure}
\begin{subfigure}[b]{0.24\textwidth}
	\centering
    \includegraphics[scale=0.22]{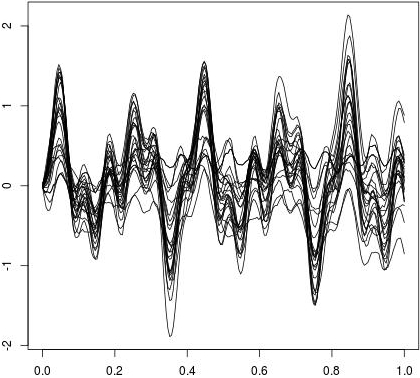}
    \caption{\footnotesize Sparse with sins.}
\end{subfigure}
\begin{subfigure}[b]{0.24\textwidth}
	\centering
    \includegraphics[scale=0.22]{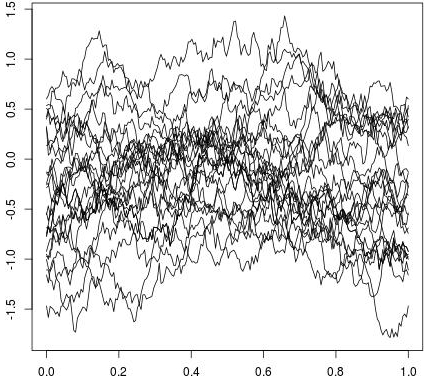}
    \caption{\footnotesize O.U.}
\end{subfigure}
\begin{subfigure}[b]{0.24\textwidth}
	\centering
    \includegraphics[scale=0.22]{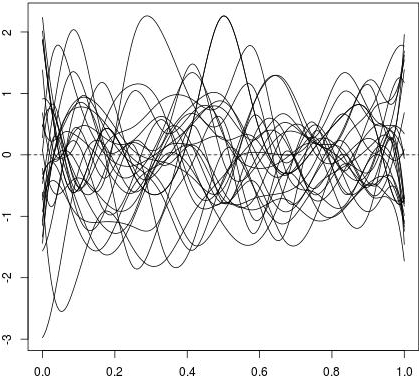}
    \caption{\footnotesize FAR}
\end{subfigure}
\caption{\footnotesize 25 trajectories of each of the simulated data sets.}
\label{Fig:TraySim}
\end{figure}


\subsection{Real data sets}

We test also some real data sets, a couple of them already used in other recent papers.

\begin{itemize}
    \item Particulate matter concentrations (PM10). This data set is used, for instance, in \citet{aue} and consists on 175 samples. It contains the $\mu gm^{-1}$ concentration in air of a particular substance with aerodynamic diameter less than 10 $\mu m$. The measures were taken each half hour from from October 1, 2010 to March 31, 2011 in Austria. The data is preprocessed in the same way as suggested in \cite{aue}. For the five windows we take blocks of 115 observations, 100 for training and 15 for test.
    \item Vehicle traffic data (Traffic) presented in \citet{auekle}. The original data set was provided by the Autobahndirektion S\"udbayern. It contains the amount of vehicles traveling each five minutes on the highway A92 in Southern Bavaria, Germany, from January 1 to June 30, 2014. Retaining only working days, we work with 119 samples divided into 5 windows of size 99; 94 for train and 5 for test. 
    \item Indoor temperature of a ``solar house'' (Temp). This data set consist in temperature measures each $15$ minutes during $42$ days in the living room of a SMLsystem solar house. The whole data set (which contains other different attributes) is studied in \cite{zamora2014} and it is available in \url{http://archive.ics.uci.edu/ml/datasets/SML2010}. This is the smallest set, so it is divided into 5 windows of size 34, from which just 2 curves are used for test. For this set we were forced to use at most 9 PCA components for the fFPE method, in order to avoid computational errors.
    \item Utility demand data (Utility) which appears in the book \cite{hyndman2008} and is available in the \textsf{R} package ``expsmooth''. The original set is made of 126 curves of hourly utility demand from a company of the the Midwestern United States, starting on January 2003. Since this work is focused on variable selection for data sampled on a fine grid, the curves have been sub-sampled to simulate observations each 15 minutes. The five windows into which the curves are splited consist in 100 samples for train and 5 for test.
\end{itemize}

A few curves of these data sets are included in Figure \ref{Fig:TrayReal}. 

\begin{figure}
\centering
\begin{subfigure}[b]{0.24\textwidth}
	\centering
    \includegraphics[scale=0.22]{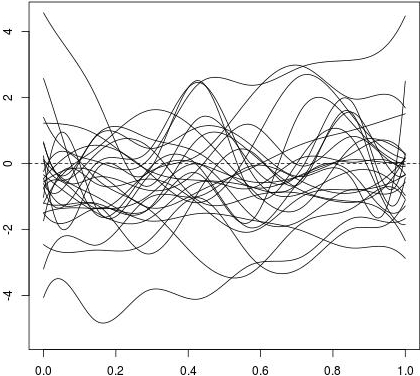}
    \caption{\footnotesize Functional PM10.}
\end{subfigure}
\begin{subfigure}[b]{0.24\textwidth}
	\centering
    \includegraphics[scale=0.22]{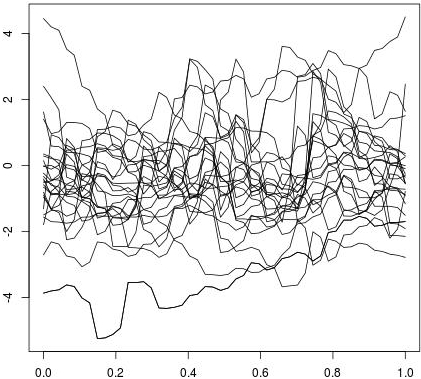}
    \caption{\footnotesize Original PM10.}
\end{subfigure}
\begin{subfigure}[b]{0.24\textwidth}
	\centering
    \includegraphics[scale=0.22]{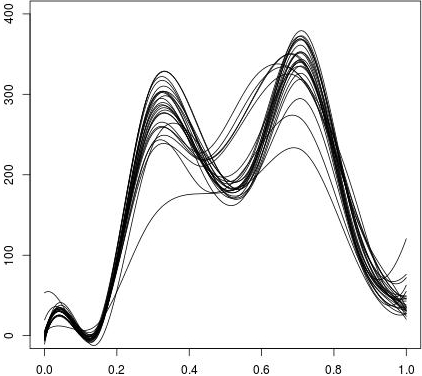}
    \caption{\footnotesize Functional traffic.}
\end{subfigure}
\begin{subfigure}[b]{0.24\textwidth}
	\centering
    \includegraphics[scale=0.22]{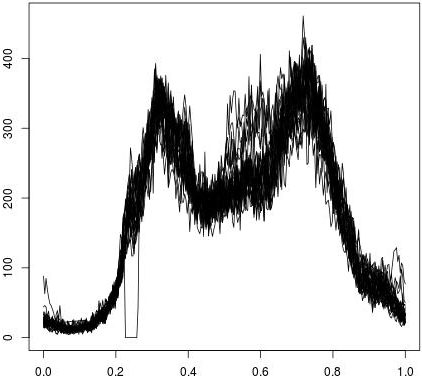}
    \caption{\footnotesize Original traffic.}
\end{subfigure}
\begin{subfigure}[b]{0.24\textwidth}
	\centering
    \includegraphics[scale=0.22]{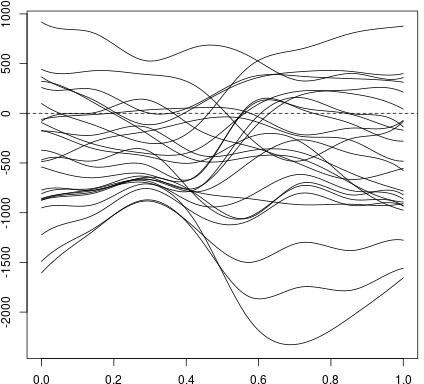}
    \caption{\footnotesize Functional temp.}
\end{subfigure}
\begin{subfigure}[b]{0.24\textwidth}
	\centering
    \includegraphics[scale=0.22]{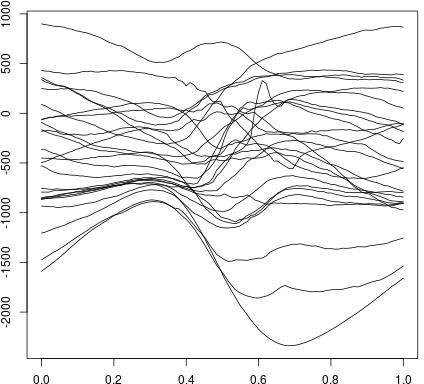}
    \caption{\footnotesize Original temperature.}
\end{subfigure}
\begin{subfigure}[b]{0.24\textwidth}
	\centering
    \includegraphics[scale=0.22]{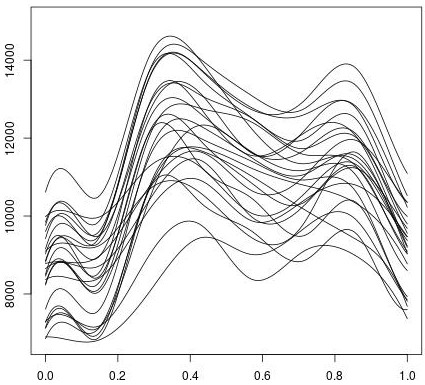}
    \caption{\footnotesize Functional utility.}
\end{subfigure}
\begin{subfigure}[b]{0.24\textwidth}
	\centering
    \includegraphics[scale=0.22]{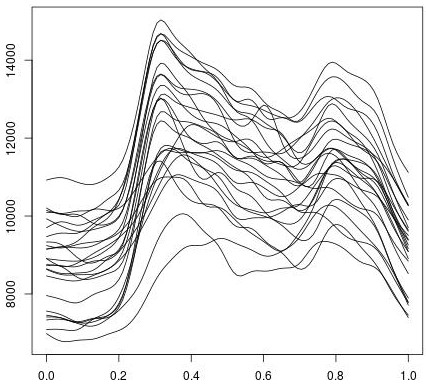}
    \caption{\footnotesize Original utility.}
\end{subfigure}
\caption{\footnotesize 25 trajectories of the real data sets, both discrete and functional.}
\label{Fig:TrayReal}
\end{figure}


\section{Experiments and results}\label{Sec:Experiments}

In this section we present the results of the experiments, conducted in order to check the performance of the proposal. The code used to run these simulations is provided and the tables containing the results can be found in Appendix C. The entries marked with bold letters there correspond to the best performance in each case. We want to emphasize that we do not intend to obtain definitive and general conclusions from these results. We believe that only real applications of the techniques lead to safe conclusions.

\subsection{Forecasting}

The main goal for which our proposal is designed is the prediction of time series. Accordingly, the greatest part of the experiments is devoted to forecasting. 

\

\textit{Simulated data sets}

Table \ref{Table:sim} of Appendix C summarizes the measurements for the simulated data sets of the two types of errors $e_1$ and $e_2$ (Equation \eqref{Eq:Error}). Regarding our two proposals, there is not a method that uniformly outperform the other one. That is, both cluster and cross-validation perform well when it comes to select the number of points. In general, our proposals are mainly the victors, closely followed by the FPCA approach with the fFPE criterion. These are the expected results, since three out of the four data sets fulfill the sparse model of Equation \eqref{Eq:SparseX}. In any case, our proposal also slightly outperforms the others for the FAR data, where this sparsity assumption is far from being satisfied.

\

\textit{Real data sets}

In Table \ref{Table:real} of Appendix C we summarize the different error measurements for the four real data sets tested. Taking these results into account, it is even less clear which implementation of our proposal, the cross-validation one or the cluster one, is the best choice. For the two first data sets it seems that the FPCA approach with fFPE slightly outperforms the other methods. However, the differences between it and our proposals are in general small, even achieving the same error, or improving it, in about half of the measures. By comparison, our proposal is the victor for the last two data sets. It is particularly noteworthy the differences obtained for the temperature data set, which is the smallest one with only $32$ curves for training in each window. The error measurements of our proposals for this set fall in the interval $[0.24,0.82]$, while the measurements for fFPE are in $[1.95, 5.3]$. This could be due to the simplicity of our proposal, which just relies on the computation of the covariance matrix of (at most) 10 real random variables. This simplicity is also reflected in the execution time presented later.

In addition, for these real data sets we have also obtained the selected points, which are shown in Figure \ref{Fig:SelPoints} (these curves are centered versions of the ones in Figure \ref{Fig:TrayReal}). It is difficult to reach meaningful conclusions for the four sets altogether, but we can make a couple of interesting observations. For instance, the points selected for the discrete data sets are more ``precise'' (in some sense) than the ones for the functional version, which look more equispaced. This could lead to think that we are ``dispersing'' the dependence of the data when representing them on a functional basis. In addition, the points selected with cluster and cross-validation for the discrete sets are similar, although it seems that the cross-validation implementation selects more points than needed (in view of the prediction performance). 

We can also analyze the points selected for each data set separately. We mainly focus on the points selected for the discrete versions of the data. For the pollution data set, it seems that the last few hours of the day are the most informative when predicting the pollution of the following day, which seems reasonable. But it seems also important to measure the pollution early in the morning (since all the methods select at least one point in the interval $[0.2,0.4]$, which would correspond to between 5:00 and 9:00). With regard to the traffic data, we can identify the most relevant time interval around 17:00, which coincide with one of the moments of greatest traffic volume. All the methods select one point around 13:00 as well, which correspond to the local minimum in the original curves. For the temperature it looks like almost the only relevant hours for prediction purposes are between 0:00 and 2:00 of the previous day (since the blue points correspond to $X_{n-2}$), along with around 8:00 in the morning. We find this result remarkable, since it is not completely intuitive. Finally, for the utility data set the most noteworthy fact is that the cluster implementation for the discrete version does not select any point for $X_{n-1}$, which would mean that the dependence lies more backward in time.

\begin{figure}
\centering
\begin{subfigure}[b]{0.95\textwidth}
	\centering
    \includegraphics[scale=0.6]{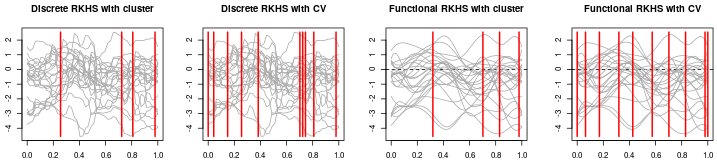}
    \caption{\footnotesize Selected points for PM10.}
\end{subfigure}
\begin{subfigure}[b]{0.95\textwidth}
	\centering
	\vspace*{5mm}
    \includegraphics[scale=0.6]{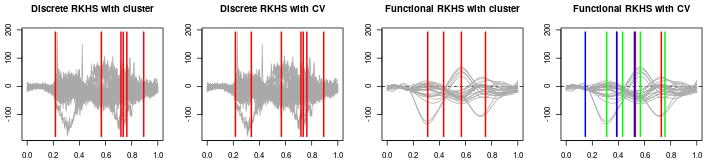}
    \caption{\footnotesize Selected points for traffic.}
\end{subfigure}
\begin{subfigure}[b]{0.95\textwidth}
	\centering
	\vspace*{5mm}
    \includegraphics[scale=0.6]{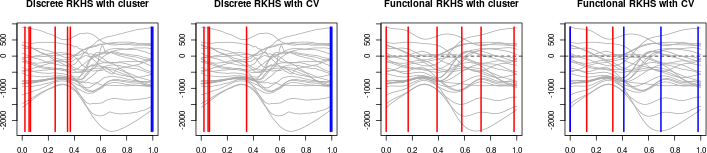}
    \caption{\footnotesize Selected points for temperature.}
\end{subfigure}
\begin{subfigure}[b]{0.95\textwidth}
	\centering
	\vspace*{3mm}
    \includegraphics[scale=0.6]{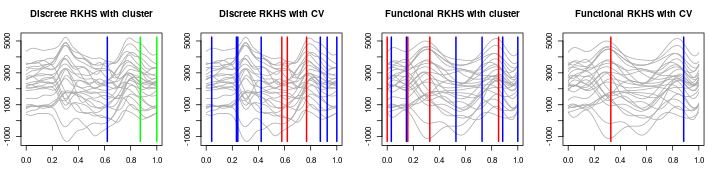}
    \caption{\footnotesize Selected points for utility.}
\end{subfigure}
\caption{\footnotesize Selected points in $X_{n-1}$ (solid), $X_{n-2}$ (dashed) and $X_{n-3}$ (dotted).}
\label{Fig:SelPoints}
\end{figure}


\subsection{Execution time}

We measured the execution times of all the previous forecasting experiments. Both the functional (funct) and the discrete (disc) implementations of our proposal are measured. Table \ref{Table:timesReal} of Appendix C shows the mean execution times of each of the methods for the real data sets. It seems that working with the transformed functional data is slower in general, and that our two discrete implementations are considerably faster than the other methods. The traffic data set is the only one for which our proposal is not the fastest one. This is due to the larger size of the grid, since the curves are sampled every five minutes. Since our procedure checks almost all the points of the grid at each step, the grid size notably affects the execution time.

We also measure how the sample size affects the execution time, increasing it from 50 to 250 samples for the four simulated data sets. The obtained results are available in Table \ref{Table:timesSim} of Appendix C and they are also plotted in Figure \ref{Fig:times}. We see that our two discrete implementations are almost not affected by the change on the sample size, compared with the other methods. The execution times for the functional implementations are also almost constant with the sample size, although we can see that for the O.U. the execution times are quite high. This is due to the use of the model $FCAR(3)$ for this data set instead of $FCAR(1)$. Therefore, we analyze also the impact of the order of the model on the execution time. We use the values $q=1,\ldots,5$ for this same data set. The results are summarized in Table \ref{Table:timesFCAR} of Appendix C. We can see that the value of this parameter significantly affects the execution time of the functional implementations.

\begin{figure}
\centering
\begin{subfigure}[b]{0.45\textwidth}
	\centering
    \includegraphics[scale=0.4]{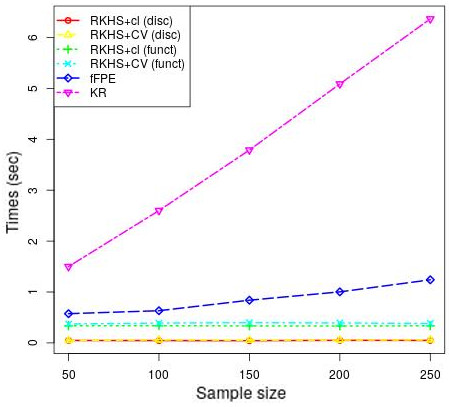}
    \caption{\footnotesize Sparse with logs.}
\end{subfigure}
\begin{subfigure}[b]{0.45\textwidth}
	\centering
    \includegraphics[scale=0.4]{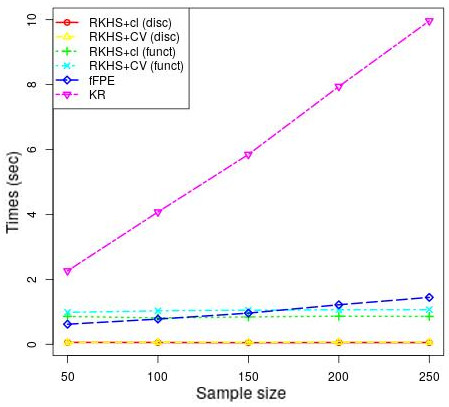}
    \caption{\footnotesize Sparse with sins.}
\end{subfigure}
\begin{subfigure}[b]{0.45\textwidth}
	\centering
    \includegraphics[scale=0.4]{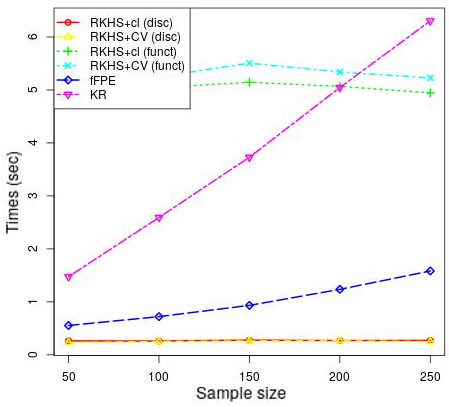}
    \caption{\footnotesize O.U.}
\end{subfigure}
\begin{subfigure}[b]{0.45\textwidth}
	\centering
    \includegraphics[scale=0.4]{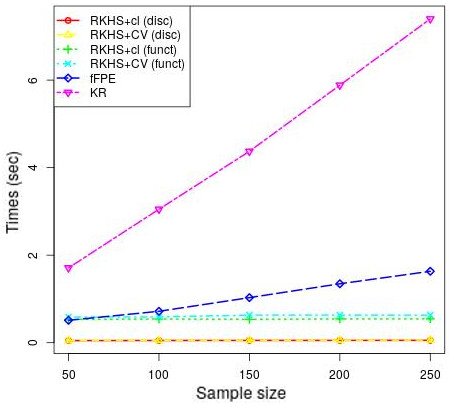}
    \caption{\footnotesize FAR.}
\end{subfigure}
\caption{\footnotesize Execution times for the simulated data sets when increasing the sample size.}
\label{Fig:times}
\end{figure}


\subsection{Kernel approximation}

As mentioned when the general model was first introduced in Equation \eqref{fullmodel}, any function $\phi(s,\cdot)\in\calh(X)$ can be arbitrarily well approximated by finite linear combinations as in Equation \eqref{Eq:SparseGeneral}, which we denote here as $\phi_p(s,\cdot)$. We quantify how good this approximation is when increasing the number of elements $p$ in this sum.

In order to avoid the use of samples, which introduces noise to the measurements, we work with the RKHS associated to the standard Brownian Motion, since in this case we know explicitly the space $\calh(X)$,
$$\calh(X) \ = \ \{f\in L^2[0,1] \ : \ f(0)=0, \ f \text{ absolutely continuous and } f'\in L^2[0,1]\},$$
where $f'$ denotes the derivative. The inner product of this space is given by
$$\langle f , g\rangle_\calh = \int_0^1 f'(s)g'(s) \dd s, \  \text{ for } f,g\in \calh(X).$$
Since the different norms in function spaces are not equivalent, it is not obvious which norm should we use to measure the errors. We decided on the $\calh(X)$-norm, since it is the one that appears in the theoretical results. In addition, the $L^2$-norm is always less than the RKHS-norm. Thus, the differences shown here are greater than the differences in $L^2[0,1]$.

We approximate different kernels in $\calh(X)$, increasing $p$ from $1$ to $3$. First, we use the sparse kernel function with logarithms previously used for the forecasting experiments. Then we try the functions
\begin{equation}\label{Eq:phisApprox}
\phi^1(s,t) = \cos (2\pi s) \sin(2\pi t), \hspace*{5mm} \phi^2(s,t) = \sin (2\pi s t) \ \text{ and } \ \phi^3(s,t) = -\log (5st+1).
\end{equation}
Different continuous functions can be tried using the provided code, as long as they fulfill $\phi(s,0)=0$ for all $s\in[0,1]$ and the derivatives of $\phi(s,\cdot)$ lie in $L^2[0,1]$.

Figure \ref{Fig:KernelApp} shows the approximated kernels. In Figure \ref{Fig:DistRKHS} we plot the distances $\Vert \phi(s,\cdot) - \phi_p(s,\cdot)\Vert_\calh$ for $s\in[0,1]$, for $p$ from $1$ to $20$. We can see that these distances go to zero for every $s$, although this convergence does not seem to be at the same rate for every point. As expected, the distances for the sparse representation are all zero for $p=3$ (the real model). 

\begin{figure}
\centering
\begin{subfigure}[b]{0.95\textwidth}
	\centering
    \includegraphics[scale=0.3]{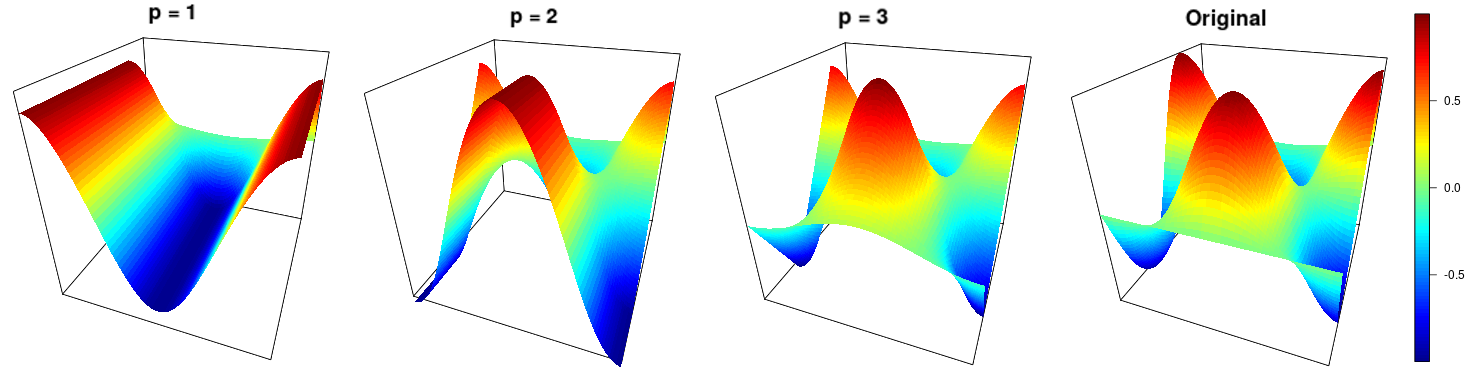}
    \caption{\footnotesize Function $\phi^1$ of Equation \eqref{Eq:phisApprox}.}
\end{subfigure}
\begin{subfigure}[b]{0.95\textwidth}
	\centering
    \includegraphics[scale=0.3]{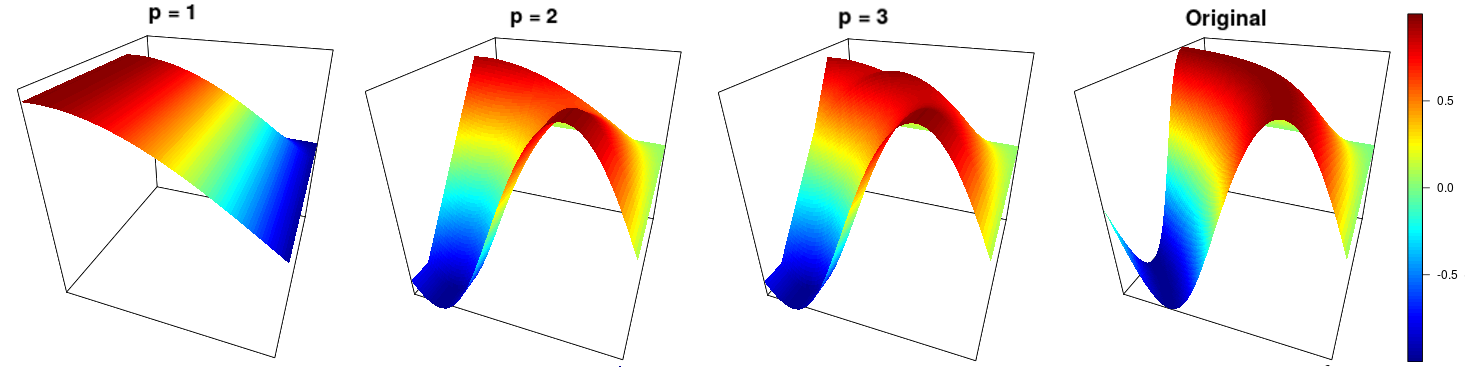}
    \caption{\footnotesize Function $\phi^2$ of Equation \eqref{Eq:phisApprox}.}
\end{subfigure}
\begin{subfigure}[b]{0.95\textwidth}
	\centering
    \includegraphics[scale=0.3]{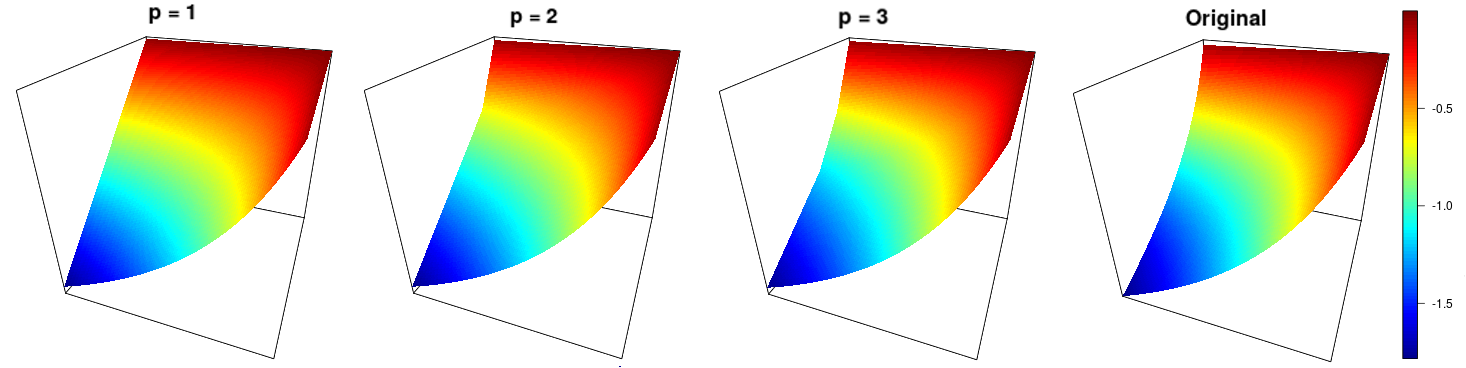}
    \caption{\footnotesize Function $\phi^3$ of Equation \eqref{Eq:phisApprox}.}
\end{subfigure}
\begin{subfigure}[b]{0.95\textwidth}
	\centering
    \includegraphics[scale=0.3]{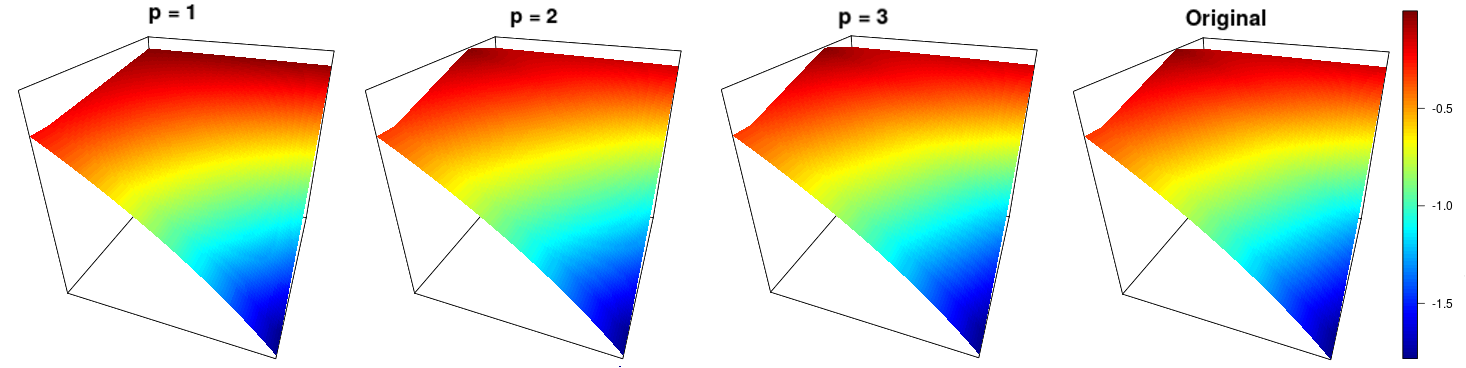}
    \caption{\footnotesize Sparse kernel with logarithms.}
\end{subfigure}
\caption{\footnotesize Approximations of the functions $\phi(s,\cdot)\in \calh(X)$ when increasing $p$ in Equation \eqref{Eq:SparseGeneral}.}
\label{Fig:KernelApp}
\end{figure}

\begin{figure}
\centering
\begin{subfigure}[b]{0.24\textwidth}
	\centering
    \includegraphics[scale=0.22]{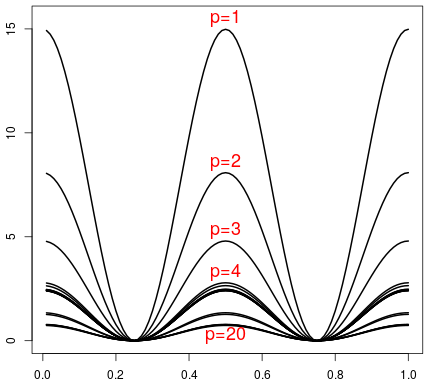}
    \caption{\footnotesize $\phi^1$ of Eq. \eqref{Eq:phisApprox}.}
\end{subfigure}
\begin{subfigure}[b]{0.24\textwidth}
	\centering
    \includegraphics[scale=0.22]{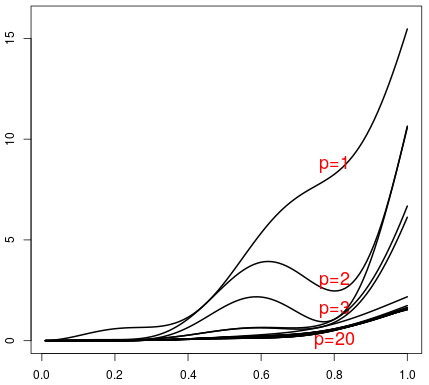}
    \caption{\footnotesize $\phi^2$ of Eq. \eqref{Eq:phisApprox}.}
\end{subfigure}
\begin{subfigure}[b]{0.24\textwidth}
	\centering
    \includegraphics[scale=0.22]{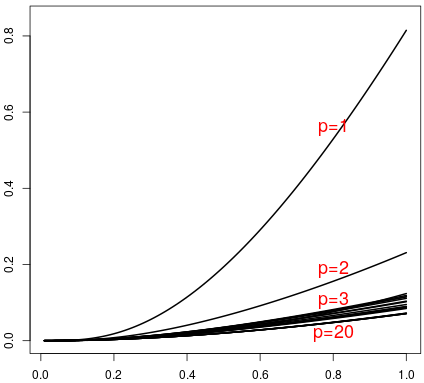}
    \caption{\footnotesize $\phi^3$ of Eq. \eqref{Eq:phisApprox}.}
\end{subfigure}
\begin{subfigure}[b]{0.24\textwidth}
	\centering
    \includegraphics[scale=0.22]{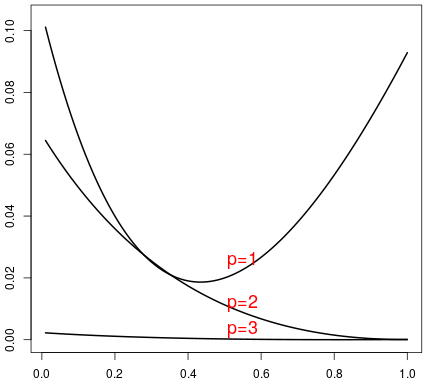}
    \caption{\footnotesize Sparse with logs.}
\end{subfigure}
\caption{\footnotesize Distances $\Vert \phi(s,\cdot) - \phi_p(s,\cdot)\Vert_\calh$ for $s\in[0,1]$.}
\label{Fig:DistRKHS}
\end{figure}


\section{Conclusions}

In the present paper we have fundamentally extended the theory developed for regression with scalar response and independent data, introduced by \citet{berrendero2018}, to the setting of prediction of functional time series, whose dependence is modeled using an autoregressive structure. That is, a variable selection technique for prediction is developed, based on the theory of Reproducing Kernel Hilbert Spaces. This variable selection approach helps to overcome some of the usual problems coming from the use of other dimension reduction techniques. Besides, the change of environment from the standard $L^2[0,1]$ to $C[0,1]$ allows us to prove uniform almost sure convergence of the estimations. We also provide an a.s. consistent estimator for the number of relevant variables involved in the model, which is computationally efficient.

When compared with other prediction methods of the literature, our proposal is quite competitive. The results obtained for the real data sets tested are  encouraging. In addition, the execution times of our implementation are smaller than the competitors. That is, our proposals, particularly the discrete approaches, are more suitable for large data sets. Furthermore, the proposed estimators can be directly adapted to discrete or fully functional data sets.


\section*{Acknowledgements}
We are very grateful to Prof. Dr. Claudia Kl\"uppelberg, Antonio Cuevas, Jos\'e Ram\'on Berrendero and Alexander Aue and two referees for their help, suggestions and criticisms. We furthermore thank Siegfried H\"ormann and the Autobahndirektion S\"uedbayern for providing the datasets. This work has been partially supported by Spanish Grant MTM2016-78751-P and the European Social Fund (``Ayudas para contratos predoctorales para la formaci\'on de doctores 2015'' and ``Ayudas a la movilidad predoctoral para la realizaci\'on de estancias breves en centros de I+D 2016'', BES-2014-070460). The paper was partly written during the visit of Beatriz Bueno-Larraz as an academic guest researcher at the Technical University of Munich.

\bibliographystyle{abbrvnat}
\bibliography{bibfile}


\newpage
\phantomsection
\addcontentsline{toc}{section}{Appendix A}
{\large \textbf{APPENDIX A: Theoretical model justification}}

\

In this appendix we include the discussion about model of Equation \eqref{fullmodel} of the main document, 
$$X_n(\cdot) \ = \ \Psi_{X_{n-1}}^{-1}\big(\phi(\cdot,\star)\big) + \varepsilon_n(\cdot),  \qquad n\in\Z.$$
We work with $X_n, n\in\Z$ a centered stationary process taking values in $C[0,1]$ such that $\E\big[\big(\sup_s |X_n(s)|\big)^2\big]<\infty$. Note that, since the process is stationary, its covariance structure remains invariant and then the space $\calh(X)$, on which the Lo\`eve's isometry is defined, does not depend on $n$. In addition, at the end of this document we include a result concerning the convergence of the projections when increasing the number of selected points.

We start analyzing the pointwise definition of the previous model, for each $s\in[0,1]$, 
\begin{equation}\label{Eq:AppXs}
X_n(s) \ = \ \Psi_{X_{n-1}}^{-1}\big(\phi(s,\cdot)\big) + \varepsilon_n(s),  \qquad n\in\Z.
\end{equation}
This definition can be applied for any process $X_n$, $n\in\Z$, with $\E\big[\big(\sup_s |X_n(s)|\big)^2\big]<\infty$, since then $X_n(s)$ has finite variance and the Lo\`eve's isometry can be applied. Moreover,
\begin{equation}\label{Eq:phiC1}
\phi(s,\cdot) \ = \ \Psi_{X_{n-1}}\left( X_n(s) - \varepsilon_n(s) \right) \ = \ \E \left[ \left( X_n(s) - \varepsilon_n(s) \right) X_{n-1}(\cdot) \right] \ = \ c_1(s,\cdot),
\end{equation}
and then 
$$\Vert c_1(s,\cdot)\Vert_{\calh} = \Vert \phi(s,\cdot) \Vert_{\calh} = \var\big( X_n(s) - \varepsilon_n(s) \big)<\infty.$$
That is, the pointwise evaluations $X_n(s)$ can be written as $\Psi_{X_{n-1}}^{-1} (c_1(s,\cdot))  + \varepsilon_n(s)$, which is always well-defined. From Equation \eqref{Eq:phiC1} we see that, when changing the working space from $L^2[0,1]$ to $\calh(X)$, the solution of the model does not require to invert the covariance operator.

As pointed out in the main document, the finite FCAR model depending on $p$ points is recovered when $\phi$ is as in Equation \eqref{Eq:SparseGeneral}. We focus now on this finite family, since we are mainly interested in variable selection. In order to make sense of the functional definition of FCAR models (in the same vein as the general AR model \eqref{ARB} of the main document) and to be able to obtain some properties about the process, we make use of a more general family. In view of Definition \ref{def:fcar1}, one can understand that each realization $x_n$ equals $\rho(x_{n-1})+\varepsilon_n$, where $\rho$ is the operator: 
\begin{equation}\label{Eq:rho}
\rho(f) = \sum_{j=1}^p \alpha_j(\cdot) f(t_j) \hspace*{5mm} \text{for} \ f\in C[0,1].
\end{equation}
We prove with Proposition \ref{Prop:ARBapp} that this interpretation is well founded. Note that this operator depends on the covariance function $c_1(s,t)$, since it uses the same set of points $t_j$ and functions $\alpha_j$ that define it (by Equation \eqref{Eq:phiC1}). 

\begin{proposition}\label{Prop:ARBapp}
Let $X_n$ follow a FCAR(1) model of Defintion~\ref{def:fcar1} such that $\sum_{j=1}^p\Vert \alpha_j \Vert < 1$, then it has a unique strictly stationary solution given by 
\begin{align*}
x_n = \sum_{j=0}^\infty \rho^j (\varepsilon_{n-j}), \quad n\in\Z,
\end{align*}
where $\rho$ is defined in Equation \eqref{Eq:rho}, the series converges almost surely and besides $$\E\Big[\Big(\sup_s\Big\vert X_n(s) - \sum\limits_{j=0}^p \rho^j (\varepsilon_{n-j})(s)\Big\vert\Big)^2\Big]\to 0 \hspace*{3mm} \text{ as } p\to\infty.$$
\end{proposition}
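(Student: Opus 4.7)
The plan is to interpret the FCAR(1) equation as an ARB(1) equation in the Banach space $C[0,1]$ and then invoke the standard geometric-series solution, so the first task is to verify the hypotheses of Bosq's framework for the operator $\rho(f)=\sum_{j=1}^p\alpha_j(\cdot)f(t_j)$. I would begin by showing $\rho$ is a bounded linear operator on $C[0,1]$ with operator norm controlled by $\sum_j\Vert\alpha_j\Vert$: for any $f\in C[0,1]$,
\begin{equation*}
\Vert\rho(f)\Vert=\sup_{s}\Big|\sum_{j=1}^p\alpha_j(s)f(t_j)\Big|\leq\sum_{j=1}^p\Vert\alpha_j\Vert\,\Vert f\Vert,
\end{equation*}
so $\Vert\rho\Vert_{op}\leq\sum_j\Vert\alpha_j\Vert<1$. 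Hence $\Vert\rho^j\Vert_{op}\leq\Vert\rho\Vert_{op}^{j}$ decays geometrically.

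Next I would show the series $S_n:=\sum_{j=0}^{\infty}\rho^{j}(\varepsilon_{n-j})$ converges almost surely in $C[0,1]$. Since $\varepsilon_n$ is a strong $C[0,1]$-white noise with $\E\Vert\varepsilon_n\Vert^2<\infty$, stationarity of $\Vert\varepsilon_n\Vert$ combined with $\E\Vert\rho^j(\varepsilon_{n-j})\Vert\leq\Vert\rho\Vert_{op}^{j}\E\Vert\varepsilon_0\Vert$ gives $\sum_j\E\Vert\rho^j(\varepsilon_{n-j})\Vert<\infty$, so by Fubini $\sum_j\Vert\rho^j(\varepsilon_{n-j})\Vert<\infty$ a.s.\ and $S_n$ is well-defined. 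That $S_n$ satisfies Equation \eqref{Eq:SparseX} follows by applying the bounded operator $\rho$ term-wise: $\rho(S_{n-1})=\sum_{j=0}^\infty\rho^{j+1}(\varepsilon_{n-1-j})=S_n-\varepsilon_n$. Strict stationarity of $(S_n)$ is inherited from the strict stationarity of $(\varepsilon_n)$ through the fixed (i.e., $n$-independent) coefficient operators $\rho^{j}$.

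For uniqueness, suppose $X_n$ is any stationary solution with the required moment condition. Iterating \eqref{Eq:SparseX} gives $X_n=\rho^k(X_{n-k})+\sum_{j=0}^{k-1}\rho^j(\varepsilon_{n-j})$, and $\E\Vert\rho^k(X_{n-k})\Vert\leq\Vert\rho\Vert_{op}^{k}\E\Vert X_0\Vert\to 0$, so $X_n=S_n$ a.s. Finally, the mean-square tail bound is a direct consequence of the previous estimates: pointwise,
\begin{equation*}
\sup_{s}\Big|X_n(s)-\sum_{j=0}^{p}\rho^{j}(\varepsilon_{n-j})(s)\Big|\leq\sum_{j=p+1}^{\infty}\Vert\rho^{j}(\varepsilon_{n-j})\Vert,
\end{equation*}
and applying Minkowski's inequality in $L^2(\P)$ to the right-hand side yields an upper bound of $\big(\E\Vert\varepsilon_0\Vert^{2}\big)^{1/2}\sum_{j=p+1}^{\infty}\Vert\rho\Vert_{op}^{j}$, which tends to $0$ as $p\to\infty$.

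The argument is essentially the classical Bosq ARB(1) existence/uniqueness theorem, so there is no single deep step; the only subtlety I anticipate is being careful that the bound $\Vert\rho\Vert_{op}\leq\sum_j\Vert\alpha_j\Vert$ (rather than a weaker inequality involving some integral norm of the $\alpha_j$) really gives a strict contraction on $C[0,1]$ under the stated assumption, and that the Minkowski interchange in the last step is legitimate — both of which are immediate once one commits to the supremum norm throughout.
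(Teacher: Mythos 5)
Your proof is correct and follows essentially the same route as the paper: both hinge on the bound $\Vert\rho\Vert_{\mathcal{L}}\leq\sum_{j=1}^p\Vert\alpha_j\Vert<1$, after which the paper simply invokes the corollary of Theorem~6.1 in Bosq for ARB(1) processes, whereas you unpack that citation into its standard ingredients (absolute a.s.\ convergence of the geometric series, verification of the recursion, uniqueness by iteration, and the Minkowski tail estimate). The extra detail is all sound and adds nothing beyond what the cited result already delivers.
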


\begin{proof}
This proof relies on the theory of Banach space valued autoregressive (ARB) processes (introduced in Definition~6.1 of \citet{bosq}) with $B=C[0,1]$. First we check that our model follows an ARB model as in \eqref{ARB} of the main document. That is, that the operator of Equation \eqref{Eq:rho} is bounded. It follows from the definition of the norm in the space of linear operators,
\begin{eqnarray*}
\Vert\rho\Vert_{\mathcal{L}} &=& \sup_{\Vert f \Vert \leq 1} \Big\Vert \sum_{j=1}^p \alpha_j(\cdot) f(t_j) \Big\Vert \ = \ \sup_{\Vert f \Vert \leq 1} \sup_{s\in [0,1]} \Big\vert \sum_{j=1}^p \alpha_j(s) f(t_j) \Big\vert \\
&\leq& \sup_{\Vert f \Vert \leq 1} \sup_{s\in [0,1]} \sum_{j=1}^p |\alpha_j(s)| \, | f(t_j) | \ \leq \ \Big(\sup_{s\in [0,1]} \sum_{j=1}^p |\alpha_j(s)|\Big) \Big( \sup_{\Vert f \Vert \leq 1} \sup_{t\in [0,1]} | f(t) | \Big) \\
&\leq & \sum_{j=1}^p \sup_{s\in [0,1]} |\alpha_j(s)| \ = \ \sum_{j=1}^p\Vert \alpha_j \Vert \ < \ 1.
\end{eqnarray*}
The operator norm satisfies $\Vert AB\Vert_{\mathcal{L}}\leq \Vert A\Vert_{\mathcal{L}}\Vert B\Vert_{\mathcal{L}}$, for $A,B$ operators in this space. Then the result follows from the corollary of Theorem~6.1 in \cite{bosq} since
$$\Vert \rho^{j_0} \Vert_{\mathcal{L}} \ \leq \ \Vert \rho \Vert_{\mathcal{L}}^{j_0} \ < \ 1,$$
for any finite positive $j_0$.
\end{proof}

The condition $\sum_{j=1}^p \Vert\alpha_j\Vert<1$ may be relaxed, as it is deduced from the proof, since it is enough to have $\Vert \rho^{j_0} \Vert_{\mathcal{L}}<1$. Proving this same result for a general kernel function $\phi$ is not straightforward, although possible. Each $\phi(s,\cdot)$ can be written as a pointwise limit of functions in $\calh_0(c_0)$, and then the operator $\rho$ of Equation \eqref{Eq:rho} would be defined as a limit of operators of finite rank. But the main issue is that one needs to impose the condition that this limit operator is bounded, which is usually unknown. Besides, additional smoothness conditions on the auto-covariance function are required.

\subsection*{Convergence when increasing the number of points}

In Section 5.1 of \cite{cambanis1985} the author studied the convergence of the finite dimensional approximations of the functions in $\mathcal{H}(X)$. Given a function $f\in\calh(X)$, let us denote the projection defined by the points in $T_p$ as 
$$f(\cdot)_{T_p} \ = \ \sum_{j=1}^p a_j(s)c_0(t_j,\cdot).$$
In Equation (5.3) of \cite{cambanis1985}, it is stated that if $f\in\mathcal{H}(X)$ and 
$$T_p^* = \arg\sup_{T_p\in D_p} f(T_p)'\Sigma_{T_p}f(T_p) = \arg\sup_{T_p\in D_p} \Vert f_{T_p} \Vert_\calh^2,$$ 
where $D_p = \{T_p\in[0,1]^p \ : \ t_1 < t_2 < \ldots < t_p\}$, then the square norm $\Vert f_{T_p^*} \Vert_\calh^2$ converges to $\Vert f \Vert_\calh^2$ as $p$ goes to infinity. In addition, if the points $T_p$ which generate the space onto which we project are not these optimal ones, the projected function $f_{T_p}$ could still converge to the real function $f$ under some conditions. For instance, in Equation (5.4) of the same book it is claimed that whenever $\boldsymbol{q}_p \in [0,1]^p$ form a regular sequence of points generated by a density $h$ (that is, $\boldsymbol{q}_p$ are the quantiles of $h$), then
$$\Vert f \Vert_\calh^2 - \Vert f_{\boldsymbol{q}_p} \Vert_\calh^2 \ = \ \Vert f - f_{\boldsymbol{q}_p} \Vert_\calh^2 \ \to \ 0 \ \text{ as } \ p\to\infty.$$
Note that we are projecting onto a closed subspace of $\mathcal{H}(X)$, so $\Vert f \Vert_\calh^2 = \Vert f_{\boldsymbol{q}_p} \Vert_\calh^2 + \Vert f - f_{\boldsymbol{q}_p} \Vert_\calh^2$. The author also give some convergence rates depending on how smooth the function $c_0$ is. 

Now, in the model that we have rewritten at the beginning of this appendix, the process $X_n$ is generated throughout a continuous function $\phi$ on $[0,1]^2$  such that $\phi(s,\cdot) \in \mathcal{H}(X)$ for all $s\in [0,1]$. We will denote as $T_p^*$ the points that minimize, for $T_p\in\Theta_p$
$$Q^1(T_p) \ = \ \int_0^1 \min_{\alpha_j(s)\in\mathbb{R}} \Big\Vert \phi(s,\cdot) - \sum_{j=1}^p \alpha_j(s)c_0(t_j,\cdot) \Big\Vert_\calh^2 \mathrm{d}s.$$
This criterion to select the points is equivalent to optimize $Q$ or $Q^0$ of Proposition \ref{Prop:Equiv} of the main document (as mentioned in the proof of that same proposition).
Then, using a similar notation for the projection of real random variables,
$$X_n(s)_{T_p} \ = \ \Psi_{X_{n-1}}^{-1}\left(\phi(s,\cdot)_{T_p}\right) \ = \ \sum_{j=1}^p \alpha_j(s) X_{n-1}(t_j),$$
we can prove the following result.

\begin{proposition}
If $X_n$ is given by Equation \eqref{fullmodel} of the main document, where $\phi$ is a continuous function in $[0,1]^2$ and $\varepsilon_n(s)$ is uncorrelated with $X_{n-1}(s)$ for every $s\in[0,1]$, then
$$\mathbb{E}\Big[\big\Vert \big(X_n(\cdot)-\varepsilon_n(\cdot)\big) - X_n(\cdot)_{T^p} \big\Vert_2^2\Big] \ \to \ 0 \ \text{ as } \ p\to\infty,$$
where $\Vert\cdot\Vert_2$ is the norm in $L^2[0,1]$.
\end{proposition}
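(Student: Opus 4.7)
The plan is to rewrite the quantity in the proposition, via Lo\`eve's isometry and Fubini, as the integrated squared $\mathcal{H}(X)$-distance between $\phi(s,\cdot)$ and its finite-rank projection, and then to exploit the minimality of $T^p := T_p^*$ by comparing against a fixed regular sequence $\boldsymbol{q}_p$ of points, for which the Cambanis-type result discussed above the proposition gives pointwise-in-$s$ convergence. Dominated convergence would then close the argument.

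For the reduction, I would observe that by \eqref{fullmodel} and the definition of the projection $X_n(s)_{T_p}$ given just above, both $X_n(s)-\varepsilon_n(s)$ and $X_n(s)_{T_p}$ are the images under $\Psi_{X_{n-1}}^{-1}$ of $\phi(s,\cdot)$ and $\phi(s,\cdot)_{T_p}$, respectively. Since $\Psi_{X_{n-1}}$ is an isometry between $\mathcal{L}(X_{n-1})$ and $\mathcal{H}(X)$ and all the random variables involved are centered, for every $s\in[0,1]$,
\begin{equation*}
\E\bigl[\bigl((X_n(s)-\varepsilon_n(s)) - X_n(s)_{T_p}\bigr)^2\bigr] \ = \ \bigl\|\phi(s,\cdot) - \phi(s,\cdot)_{T_p}\bigr\|_\mathcal{H}^2.
\end{equation*}
Fubini--Tonelli would then turn the quantity in the proposition into $Q^1(T^p) = \int_0^1 \|\phi(s,\cdot) - \phi(s,\cdot)_{T^p}\|_\mathcal{H}^2\,\dd s$.

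Next, I would take $\boldsymbol{q}_p$ to be the uniform partition of $[0,1]$, which belongs to $\Theta_p$ once one takes $\delta=0$ (permitted by the remark in Section 2). The minimality of $T^p$ in $\Theta_p$ gives
\begin{equation*}
Q^1(T^p) \ \leq \ Q^1(\boldsymbol{q}_p) \ = \ \int_0^1 \bigl\|\phi(s,\cdot) - \phi(s,\cdot)_{\boldsymbol{q}_p}\bigr\|_\mathcal{H}^2\,\dd s.
\end{equation*}
By equation (5.4) of \citet{cambanis1985} recalled at the beginning of this appendix, the integrand tends to $0$ pointwise in $s$. Since $\phi(s,\cdot)_{\boldsymbol{q}_p}$ is the orthogonal projection of $\phi(s,\cdot)$ in the Hilbert space $\mathcal{H}(X)$, the integrand is dominated by $\|\phi(s,\cdot)\|_\mathcal{H}^2$, which by \eqref{Eq:phiC1} equals $\var(X_n(s)-\varepsilon_n(s))$ and is thus uniformly bounded on $[0,1]$ by the integrable constant $2\E[(\sup_s|X_n(s)|)^2]+2\E[(\sup_s|\varepsilon_n(s)|)^2] < \infty$. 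Dominated convergence would then yield $Q^1(\boldsymbol{q}_p)\to 0$ and hence $Q^1(T^p)\to 0$, which is the claim.

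The main obstacle will be verifying cleanly the pointwise convergence $\|\phi(s,\cdot) - \phi(s,\cdot)_{\boldsymbol{q}_p}\|_\mathcal{H}^2 \to 0$ at each fixed $s$: this amounts to showing that the closed linear span of $\{c_0(t,\cdot): t\in\bigcup_p \boldsymbol{q}_p\}$ in $\mathcal{H}(X)$ is all of $\mathcal{H}(X)$, which should follow from density of $\bigcup_p \boldsymbol{q}_p$ in $[0,1]$, continuity of the map $t\mapsto c_0(t,\cdot)\in\mathcal{H}(X)$ (a consequence of continuity of $c_0$ on $[0,1]^2$), and density of $\mathcal{H}_0(X)$ in $\mathcal{H}(X)$ by construction. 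The joint measurability in $s$ needed for Fubini and for dominated convergence is immediate from the quadratic-form representation of $\|\phi(s,\cdot)_{T_p}\|_\mathcal{H}^2$ in terms of $c_0$ and $\phi$.
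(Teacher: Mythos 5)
Your proposal is correct and follows essentially the same route as the paper's own proof: Tonelli plus Lo\`eve's isometry to reduce the quantity to $Q^1(T_p^*)$, comparison with a regular sequence $\boldsymbol{q}_p$ via optimality, pointwise convergence from Equation (5.4) of Cambanis, and dominated convergence with dominating function $\Vert\phi(s,\cdot)\Vert_\calh^2$. The only cosmetic differences are your choice of the uniform partition as the regular sequence and your bound on the dominating function via the second-moment assumption rather than via continuity of $s\mapsto\Vert\phi(s,\cdot)\Vert_\calh^2$ on the compact $[0,1]$; both are valid.
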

\begin{proof}
We denote as $\boldsymbol{q}_p\in[0,1]^p$ the quantiles of a fixed density $h$. By Tonelli's Theorem we can change the order of the integrals 
\begin{eqnarray*}
\mathbb{E} \int_0^1 \big( (X_n(s)-\varepsilon_n(s)) - X_n(s)_{T_p} \big)^2 \mathrm{d}s &=&  \int_0^1 \mathbb{E} \big[ (X_n(s)-\varepsilon_n(s)) - X_n(s)_{T_p} \big]^2 \mathrm{d}s \\[1em]
&=& \int_0^1 \var\big( \big(X_n(s)-\varepsilon_n(s)\big) - X_n(s)_{T_p} \big) \mathrm{d}s \\[1em]
&=& Q_1(T_p^*) - \mathbb{E}\big[\Vert \varepsilon_n \Vert_2^2\big] \ = \ Q^1(T_p^*) \\[1em]
&\leq& \ Q^1(\boldsymbol{q}_p) \ = \ \int_0^1 \big\Vert \phi(s,\cdot) - \phi(s,\cdot)_{\boldsymbol{q}_p} \big\Vert_\calh^2 \mathrm{d}s.
\end{eqnarray*}
That is, we have to see that the sequence of functions $g_p(s) = \Vert \phi(s,\cdot) - \phi(s,\cdot)_{\boldsymbol{q}_p} \Vert_\calh^2$ converges to zero in $L^1[0,1]$.
Since the points $\boldsymbol{q}_p$ are a regular sequence, we know that $g_p(s) \to 0$ as $p\to\infty$ for each $s\in[0,1]$ (Equation (5.4) of \cite{cambanis1985}). Besides, these functions are bounded,
$$|g_p(s)| \ = \ g_p(s) \ = \ \Vert \phi(s,\cdot)\Vert_\calh^2 - \Vert\phi(s,\cdot)_{\boldsymbol{q}_p} \Vert_\calh^2 \ \leq \ \Vert \phi(s,\cdot)\Vert_\calh^2,$$
and since $\int_0^1 \Vert \phi(s,\cdot)\Vert_\calh^2 \mathrm{d}s\leq \sup_s \Vert \phi(s,\cdot)\Vert_\calh^2 <\infty$ (because $s\mapsto \Vert \phi(s,\cdot)\Vert_\calh^2$ is the composition of two continuous functions $s\mapsto \phi(s,\cdot)$ and $f\mapsto\Vert f \Vert_\calh^2$ over the compact $[0,1]$) we obtain the result using the dominated convergence theorem. 
\end{proof}


\newpage
\phantomsection
\addcontentsline{toc}{section}{Appendix B}
{\large \textbf{APPENDIX B: Additional proofs}}

\

In this appendix we include the proofs that are mainly based on the pointwise results of \cite{berrendero2018}.

\

\textbf{Proof of Proposition \ref{Prop:Equiv}}

This proof is inspired by the one of Proposition 1 of \cite{berrendero2018}. Pointwise for each $s\in[0,1]$, using Lo\`eve's isometry we have that 
$$\var\Big( X_n(s) - \sum_{j=1}^p \alpha_j(s)X_{n-1}(t_j) \Big) \ = \ \big\Vert \phi(s,\cdot) - \sum_{j=1}^p \alpha_j(s)c_0(t_j,\cdot) \big\Vert_\calh^2 + \sigma(s),$$
where $\sigma(s) = \var(\varepsilon_n(s))\leq \Vert\E [\varepsilon_n^2] \Vert \leq \E \Vert \varepsilon_n^2\Vert <\infty$, so the minimizing values $\alpha_j(s)$ are the same for both sides of the equality. Again pointwise for each $s\in[0,1]$, using the reproducing property of $\calh(X)$,
$$\Big\Vert \phi(s,\cdot) - \sum_{j=1}^p \alpha_j(s)c_0(t_j,\cdot) \Big\Vert_{\calh}^2 = \Vert\phi(s,\cdot)\Vert_{\calh} + \sum_{i,j=1}^p \alpha_i(s)\alpha_j(s) c_0(t_i,t_j)  - 2 \sum_{j=1}^p \alpha_j(s)\phi(s,t_j).$$
Since $c_0$ is a positive-definite function, this last function is convex in $\alpha_j(s)$ for each $s\in[0,1]$. Therefore we can compute its minimum pointwisely, which is achieved at $(\alpha_1^*(\cdot),\ldots,\alpha_p^*(\cdot))' = \Sigma_{T_p}^{-1} c_1(\cdot,T_p)$, since $c_1(s,t) = \phi(s,t)$ for each $s$ (Equation \eqref{Eq:phiC1} of Appendix A). Then if we substitute this optimum in the previous equation we get
$$\min_{\alpha_j(s)\in \mathbb{R}} \ \big\Vert \phi(s,\cdot) - \sum_{j=1}^p \alpha_j(s)c_0(t_j,\cdot) \big\Vert_\calh^2 \ = \ \Vert\phi(s,\cdot)\Vert_\calh^2 - c_1(s,T_p)'\Sigma_{T_p}^{-1} c_1(s,T_p).$$

Hence, integrating over $s\in [0,1]$, 
$$Q(T_p) = \int_0^1 \sigma(s) \dd s + \int_0^1 \Vert \phi(s,\cdot) \Vert_{\calh}^2 \dd s - Q^0(T_p) = C - Q^0(T_p).$$ 
This constant $C$ is finite since the integral of $\sigma(s)$ is bounded by $\Vert\E [\varepsilon_n^2]\Vert<\infty$ and 
$\int_0^1 \Vert \phi(s,\cdot) \Vert_{\calh}^2 \dd s \leq \sup_{s\in[0,1]} \Vert \phi(s,\cdot) \Vert_{\calh}^2$
being $\Vert \phi(s,\cdot) \Vert_\calh^2$ a continuous function on $[0,1]$ (it is the composition of two continuous functions, $s\mapsto \phi(s,\cdot) = c_1(s,\cdot)$ and $f\mapsto \Vert f \Vert_{\calh}$).

\


\textbf{Proof of Lemma \ref{Lemma:ContConv}}\label{Sec:ProofContConv}

Denoting
$$Q^0(T_p) \ = \ \int_0^1 c_1(s,T_p)' \Sigma_{T_p}^{-1} c_1(s,T_p) \ \mathrm{d}s \ = \ \int_0^1 q^0(T_p;s) \mathrm{d}s,$$
$$\widehat{Q}^0_m(T_p) \ = \ \int_0^1 \widehat{c}_1(s,T_p)' \widehat{\Sigma}_{T_p}^{-1} \widehat{c}_1(s,T_p) \ \mathrm{d}s \ = \ \int_0^1 \widehat{q}^0_m(T_p;s) \mathrm{d}s,$$
we can extend the proofs of Lemmas 2 and 3 of \cite{berrendero2018} to our setting.

For the continuity of the functions, it can be shown that $q^0(T_p;s)$ and $\wh q_m^0(T_p;s)$ are continuous in $T_p$ for each $s\in [0,1]$, using the same reasoning as in the proof of Lemma 2 of \cite{berrendero2018} but using now Lemma \ref{Lemma:UnifConv}. Then if $\Vert T_p - S_p\Vert_{\R^p} < \delta$, where $\Vert\cdot\Vert_{\R^p}$ is the usual vector norm,
$$\vert Q^0(T_p) - Q^0(S_p) \vert \ = \ \Big\vert \int_0^1 \big(q^0(T_p;s) - q^0(S_p;s)\big)\mathrm{d}s \Big\vert \ \leq \ \int_0^1 \vert q^0(T_p;s) - q^0(S_p;s)\vert\mathrm{d}s \ < \ \varepsilon.$$
And equivalently to see that $\widehat{Q}^0_m$ is continuous with probability one.

The uniform convergence is straightforward in view of the definition of $Q^0$. By Lemma \ref{Lemma:UnifConv}, the vector $\wh c_1(s,T_p)$ converge uniformly a.s. to $c_1(s,T_p)$. Besides, by the same reasoning as in Lemma 3 of \cite{berrendero2018}, using Lemma \ref{Lemma:UnifConv} now, one gets the uniform convergence of the inverse covariance matrices. Then, we get
$$\sup_{(s,T_p)\in [0,1]\times\Theta_p} \big\vert \widehat{q}^0_m(T_p;s) - q^0(T_p;s)\big\vert \ \stas \ 0,$$
which implies the convergence of the integral over $s\in [0,1]$.

\


\textbf{Proof of Theorem \ref{Theo:Ts}}\label{Sec:ProofTs}

(a) Because of the equivalence of the criteria proved in Proposition \ref{Prop:Equiv}, it is enough to see that $T^*$ is the only global minimum of $Q(T_{p^*})$ in $\Theta_{p^*}$. From Equation \eqref{Q1} it is clear that $T^*$ minimizes $Q$ since
$$Q(T_{p^*}) = \int_0^1 \var\big( X_n(s) - X_n(s)_{T_{p^*}} \big) \ \mathrm{d}s = \int_0^1 \var\big( X_n(s)_{T^*} - X_n(s)_{T_{p^*}} \big) \ \mathrm{d}s + \Vert \var(\varepsilon_n)\Vert_2,$$
where $\Vert\cdot\Vert_2$ is the norm in $L^2[0,1]$, the space of square integrable functions over $[0,1]$. Therefore, its minimum value is $\Vert \var(\varepsilon_n)\Vert_2 = \Vert \sigma\Vert_2$. If there exists another vector $S^*\neq T^*$ which also achieves this value, it must be $\var\big( X_n(s)_{T^*} - X_n(s)_{S^*} \big)=0$ for almost every $s\in [0,1]$ (except on a set of measure zero with regard to the Lebesgue measure). However, it is enough to have one point $s_0$ in which this equality holds. For this point we have that $X_n(s_0)_{T^*}=X_n(s_0)_{S^*}$ a.s., which contradicts the assumption that the covariance matrix $\Sigma_{T^*\cup S^*}$ is invertible, and then $T^*=S^*$. 

(b) The result follows from the fact that $\widehat{Q}^0_m$ and $Q^0$ are continuous functions such that $\widehat{Q}^0_m$ tends uniformly a.s. to $Q^0$ (Lemma \ref{Lemma:ContConv}) and $Q^0$ has a unique maximum in $\Theta_{p^*}$ (part (a)).

(c) The same argument as in the proof of Theorem 1.c of \cite{berrendero2018}.

\


\textbf{Proof of Theorem \ref{Theo:Resp}}\label{Sec:ProofResp}

(a) We derive a proof similar to the proof of Theorem 2 of \cite{berrendero2018}. We can decompose the norm in the statement as
\begin{align*}
\hspace*{-3mm}\big\Vert \wh X_{n}(\cdot)_{\widehat{T}_{p^*,m}} \hspace*{-3mm}- X_n(\cdot)_{T^*} \big\Vert &= \big\Vert \wh c_1(\cdot,\wh T_{p^*,m})' \, \widehat{\Sigma}_{\wh T_{p^*,m}}^{-1}X_{n-1}({\wh T_{p^*,m}}) - c_1(\cdot,T^*)' \, \Sigma_{T^*}^{-1} X_{n-1}(T^*)\big\Vert \nonumber\\
&\leq \big\Vert \big(\wh c_1(\cdot,\wh T_{p^*,m}) - c_1(\cdot,T^*)\big)' \, \widehat{\Sigma}_{\wh T_{p^*,m}}^{-1}X_{n-1}({\wh T_{p^*,m}}) \big\Vert \nonumber\\
& + \ \big\Vert c_1(\cdot,T^*)' \, \big( \widehat{\Sigma}_{\wh T_{p^*,m}}^{-1}\hspace*{-2mm}X_{n-1}({\wh T_{p^*,m}}) - \Sigma_{T^*}^{-1} X_{n-1}(T^*) \big) \big\Vert \nonumber\\
&\leq \big\Vert \Sigma_{T^*}^{-1} X_{n-1}(T^*) + \epsilon \, \big\Vert_{\R^{p^*}} \sup_{s\in[0,1]} \big\Vert \wh c_1(s,\wh T_{p^*,m}) - c_1(s,T^*) \big\Vert_{\R^{p^*}} \tag{A}\label{Eq:A}\\
& + \ \big\Vert \widehat{\Sigma}_{\wh T_{p^*,m}}^{-1} \hspace*{-2mm}X_{n-1}({\wh T_{p^*,m}}) - \Sigma_{T^*}^{-1} X_{n-1}(T^*) \big\Vert_{\R^{p^*}} \sup_{s\in[0,1]} \big\Vert c_1(s,T^*)\big\Vert_{\R^{p^*}}, \tag{B}\label{Eq:B}
\end{align*}
where $\Vert\cdot\Vert_{\R^{p^*}}$ denotes the Euclidean norm in $\R^{p^*}$. For the term in Equation \eqref{Eq:A}, it is enough to see the convergence of each entry of the covariance vector. For any $1\leq j \leq p^*$,
\begin{eqnarray*}
\sup_{s\in[0,1]}\big\vert\wh c_1(s,\wh t_j) - c_1(s,t_j^*) \big\vert &\leq& \sup_{s\in[0,1]}\big\vert\wh c_1(s,\wh t_j) - c_1(s,\wh t_j) \big\vert + \sup_{s\in[0,1]}\big\vert  c_1(s,\wh t_j) - c_1(s,t_j^*) \big\vert \\
&\leq& \sup_{s,t\in[0,1]} \big\vert\wh c_1(s,t) - c_1(s,t) \big\vert + \sup_{s\in[0,1]}\big\vert  c_1(s, \,\wh t_j-t_j^*) \big\vert.
\end{eqnarray*}
The first addend goes a.s. to zero by Lemma 1, and the second addend due to the a.s. convergence of $\wh T_{p^*,m}$ to $T^*$ and the continuity of both $c_1$ and the norm in $C[0,1]$ (continuous mapping theorem).

The first term in Equation \eqref{Eq:B} is bounded by 
\begin{eqnarray*}
\big\Vert \widehat{\Sigma}_{\wh T_{p^*,m}}^{-1} \hspace*{-2mm}X_{n-1}({\wh T_{p^*,m}}) - \Sigma_{T^*}^{-1} X_{n-1}(T^*) \big\Vert_{\R^{p^*}} &\leq& \big\Vert \widehat{\Sigma}_{\wh T_{p^*,m}}^{-1} \hspace*{-2mm}X_{n-1}({\wh T_{p^*,m}}) - \Sigma_{\wh T_{p^*,m}}^{-1} X_{n-1}(\wh T_{p^*,m}) \big\Vert_{\R^{p^*}} \\
&& + \ \big\Vert \Sigma_{\wh T_{p^*,m}}^{-1} \hspace*{-2mm}X_{n-1}({\wh T_{p^*,m}}) - \Sigma_{T^*}^{-1} X_{n-1}(T^*) \big\Vert_{\R^{p^*}}.
\end{eqnarray*}
The first term is bounded by $\sup_{T\in\Theta_{p^*}} \big\Vert \big( \widehat{\Sigma}_T^{-1} - \Sigma_{T}^{-1}\big) X_{n-1}(T) \big\Vert_{\R^{p^*}}$. Using Lemma \ref{Lemma:UnifConv} we show that the sample version of the inverse of the covariance matrix, ${\wh \Sigma}_{T}^{-1}$, as a function on $\Theta_{p^*}$, converges uniformly with probability one to its population counterpart $\Sigma_{T}^{-1}$ (using a similar reasoning as in the proof of Lemma 3 of \cite{berrendero2018}). Since $X_{n-1}$ has continuous trajectories on a compact space, the product goes almost sure to zero. The second addend also goes to zero due to Theorem 1(b) and the continuity of the involved functions. 

(b) The statement is equivalent to see that the real valued random variables
$Z_m \ = \ \big\Vert \widehat{X}_n(\cdot)_{\widehat{T}_{p^*,m}} - X_n(\cdot)_{T^*} \big\Vert^2$ converge to 0 in the $L^1$-norm. From part a) we know that they converge a.s. to zero, so it only remains to check that the sequence $Z_m$ is uniformly integrable (Vitali's convergence theorem). In order to do this, it is enough to check that $\sup_m \E[Z_m^\eta]<\infty$ for some $\eta>1$ (de la Vall\'ee-Poussin's theorem). Since the function $f(y)=y^{2\eta}$ is convex in this case,
$$Z_m^\eta \ \leq \ \big(\big\Vert \widehat{X}_n(\cdot)_{\widehat{T}_{p^*,m}} \big\Vert + \big\Vert X_n(\cdot)_{T^*} \big\Vert\big)^{2\eta} \ \leq \ 0.5\big[\big(2\big\Vert \widehat{X}_n(\cdot)_{\widehat{T}_{p^*,m}} \big\Vert\big)^{2\eta} + \big(2\big\Vert X_n(\cdot)_{T^*} \big\Vert\big)^{2\eta}\big].$$
Thus, we have to verify that
$$2^{2\eta-1} \Big(\sup_m\E\big[ \big\Vert \widehat{X}_n(\cdot)_{\widehat{T}_{p^*,m}} \big\Vert^{2\eta} \big] + \E\big[\big\Vert X_n(\cdot)_{T^*} \big\Vert^{2\eta}\big] \Big)<\infty.$$
Let us start with the first addend. Using Lemma \ref{Lemma:UnifConv} the supremum of $\Vert \widehat{c}_{1}(s,T_{p^*})'\widehat{\Sigma}_{T_{p^*}}^{-1} - c_{1}(s,T_{p^*})'\Sigma_{T_{p^*}}^{-1} \Vert_{\R^{p^*}}$ for $(s,T_{p^*})\in [0,1]\times \Theta_{p^*}$ goes a.s to zero, then we bound the norm as
\begin{eqnarray*}
\big\Vert \widehat{X}_n(\cdot)_{\widehat{T}_{p^*,m}} \big\Vert^{2\eta} &=& \Big(\sup_{s\in [0,1]} \big\vert \widehat{c}_{1}(s,\widehat{T}_{p^*,m})'\,\widehat{\Sigma}_{\widehat{T}_{p^*,m}}^{-1}X_{n-1}(\widehat{T}_{p^*,m})\big\vert \Big)^{2\eta} \\[1em]
& \leq & \big\Vert X_{n-1}(\widehat{T}_{p^*,m}) \big\Vert_{\R^{p^*}}^{2\eta} \Big( \sup_{s\in [0,1]} \big\Vert \widehat{c}_{1}(s,\widehat{T}_{p^*,m})'\widehat{\Sigma}_{\widehat{T}_{p^*,m}}^{-1} \big\Vert_{\R^{p^*}} \Big)^{2\eta} \\[1em]
&\leq& \Vert X_{n-1}(\widehat{T}_{p^*,m}) \Vert_{\R^{p^*}}^{2\eta} \Big( \sup_{s\in[0,1],T_{p^*}\in\Theta_{p^*}} \big\Vert c_{1}(s,T_{p^*})'\,\Sigma_{T_{p^*}}^{-1}\big\Vert_{\R^{p^*}} + \epsilon \Big)^{2\eta} \\[1em]
& \leq & C \ \Vert X_{n-1}(\widehat{T}_{p^*,m}) \Vert_{\R^{p^*}}^{2\eta},
\end{eqnarray*}
where $C<\infty$, since the function involved in the supremum is a continuous function on the compact space $[0,1]\times\Theta_{p^*}$. To conclude we can use the same reasoning as in the proof of Theorem 2 of \cite{berrendero2018}.

For the second addend, which does not depend on the sample size,
\begin{eqnarray*}
\E \Big[ \Big( \sup_{s\in[0,1]} \big\vert X_n(s)_{T^*}\big\vert \Big)^{2\eta} \Big] &=& \E \Big[ \Big( \sup_{s\in[0,1]} \big\vert c_1(s,T^*)'\,\Sigma_{T^*}^{-1} X_{n-1}(T^*) \big\vert \Big)^{2\eta} \Big] \\
&\leq& \Big( \sup_{s\in[0,1]} \big\Vert c_1(s,T^*)'\,\Sigma_{T^*}^{-1} \big\Vert_{\R^{p^*}} \Big)^{2\eta} \ \E \Big[ \big\Vert X_{n-1}(T^*) \big\Vert_{\R^{p^*}}^{2\eta} \Big],
\end{eqnarray*}
where the expectation is finite by hypothesis.


\newpage
\phantomsection
\addcontentsline{toc}{section}{Appendix C}
{\large \textbf{APPENDIX C: Pseudocodes and tables}}

\vspace*{1cm}

\begin{algorithm}
\caption{Variable selection for a given $p$}\label{Al:VS}
\begin{algorithmic}[1]
\Procedure{RKHS variable selection}{}
\BState \emph{First point}:
\State $Quot(grid) \gets \widehat c_0(t,t)^{-1} \int_0^1 \widehat c_1(s,t)^2 \mathrm{d} s, \ \forall t\in grid$
\State $M(1) \gets max(Quot)$
\State $pts(1) \gets \text{which } Quot == M(1)$

\BState \emph{Rest of the points}:
\For {i in 2 to p}
    \State $Quot(grid \backslash pts) \gets \text{Quotient of Eq. } \eqref{Eq:QnIterM} \ \forall t_{p+1}\in \{grid \text{ and not in } pts\}$
    \State $M(i) \gets max(Quot)$
    \State $pts(i) \gets \text{which } Quot == M(i)$
\EndFor
\BState \emph{Return pts (points) and M}
\EndProcedure
\end{algorithmic}
\end{algorithm}

\

\begin{algorithm}
\caption{Cluster approximation to estimate $\widehat p$}\label{Al:Points}
\begin{algorithmic}[1]
\Procedure{Number of points}{}
\State $M, points \gets \text{ RKHS method for P points }$
\State $L_m \gets log(M)$
\BState \emph{k-means procedure with} $k=2$:
\State $clusters \gets kmeans(L_m)$
\State $cl1 \gets clusters(1)$
\State $\widehat p \gets \text{ tail of } clusters==cl1$
\EndProcedure
\end{algorithmic}
\end{algorithm}

\begin{table}[ht]
\small
\centering
\resizebox{\textwidth}{!}{
\begin{tabular}{lcll|cccccc}
  \hline
 &&&& RKHS+cl & RKHS+CV & fFPE & KR & Exact & Naive \\ 
  \hline
  \multirow{8}{1.5cm}{Sparse with log.} & \multirow{4}{0.5cm}{$\varepsilon_1$} & \multirow{2}{*}{Disc.}& L2 & \textbf{0.64} & 0.71 & 0.65 & 0.66 & 0.55 & 3.24 \\
&&&  sup & \textbf{0.68} & 0.71 & \textbf{0.68} & 0.82 & 0.65 & 1.64 \\
&& \multirow{2}{*}{Funct.}& L2 & 0.65 & \textbf{0.64} & 0.67 & 0.67 & 0.56 & 3.32 \\ 
&&& sup & \textbf{0.65} & \textbf{0.65} & \textbf{0.65} & 0.83 & 0.62 & 1.67 \\ 
& \multirow{4}{0.7cm}{$\varepsilon_2$} & \multirow{2}{*}{Disc.}& L2 & \textbf{0.16} & 0.18 & 0.17 & 0.18 & 0.14 & 1.32 \\ 
&&&  sup & \textbf{0.30} & 0.32 & \textbf{0.30} & 0.39 & 0.29 & 0.80 \\
&& \multirow{2}{*}{Funct.}& L2 & \textbf{0.32} & \textbf{0.32} & 0.33 & 0.34 & 0.28 & 2.63 \\ 
&&&  sup & \textbf{0.55} & \textbf{0.55} & 0.56 & 0.77 & 0.53 & 1.57 \\
   \hline
\multirow{8}{1.5cm}{Sparse with sins} & \multirow{4}{0.5cm}{$\varepsilon_1$} &  \multirow{2}{*}{Disc.}&   L2 & \textbf{0.72} & 0.74 & 0.83 & 0.94 & 0.60 & 2.42 \\ 
&&&  sup & \textbf{0.71} & 0.73 & 0.84 & 0.95 & 0.67 & 1.50 \\ 
&& \multirow{2}{*}{Funct.}& L2 & \textbf{0.78} & 0.81 & 0.81 & 0.94 & 0.65 & 2.60 \\
&&&  sup & \textbf{0.77} & \textbf{0.77} & \textbf{0.77} & 0.95 & 0.69 & 1.53 \\
& \multirow{4}{0.7cm}{$\varepsilon_2$} & \multirow{2}{*}{Disc.}& L2 & \textbf{0.38} & \textbf{0.38} & 0.42 & 0.48 & 0.33 & 1.10 \\
&&&  sup & \textbf{0.36} & 0.37 & 0.42 & 0.49 & 0.34 & 0.75 \\
&& \multirow{2}{*}{Funct.}& L2 & \textbf{0.78} & 0.79 & 0.80 & 0.91 & 0.69 & 2.19 \\
&&& sup & \textbf{0.75} & \textbf{0.75} & \textbf{0.75} & 0.94 & 0.68 & 1.47 \\
   \hline
\multirow{8}{1.5cm}{O.U.} & \multirow{4}{0.5cm}{$\varepsilon_1$} & \multirow{2}{*}{Disc.}& L2 & 1.07 & 1.01 & \textbf{1.00} & 1.15 & 0.83 & 2.33 \\
&&&  sup & \textbf{0.88} & \textbf{0.88} & 0.93 & 0.98 & 0.88 & 1.35 \\
&& \multirow{2}{*}{Funct.}& L2 & \textbf{1.00} & \textbf{1.00} & 1.05 & 1.20 & 0.85 & 2.49 \\
&&&  sup & \textbf{0.91} & \textbf{0.91} & 0.92 & 0.98 & 0.86 & 1.34 \\
& \multirow{4}{0.5cm}{$\varepsilon_2$} & \multirow{2}{*}{Disc.}& L2 & \textbf{0.31} & \textbf{0.31} & 0.35 & 0.42 & 0.30 & 0.65 \\
&&& sup & \textbf{0.44} & \textbf{0.44} & 0.47 & 0.50 & 0.44 & 0.65 \\
&& \multirow{2}{*}{Funct.}& L2 & \textbf{0.66} & \textbf{0.66} & 0.68 & 0.83 & 0.59 & 1.26 \\
&&& sup & \textbf{0.85} & \textbf{0.85} & 0.86 & 0.94 & 0.81 & 1.21 \\
\hline
\multirow{4}{1.5cm}{FAR} & \multirow{2}{0.5cm}{$\varepsilon_1$} && L2 & \textbf{1.00} & 1.01 & 1.01 & 1.13 & 0.85 & 2.20 \\
&&&   sup & \textbf{1.00} & \textbf{1.00} & \textbf{1.00} & 1.04 & 0.91 & 1.45 \\
& \multirow{2}{0.5cm}{$\varepsilon_2$} && L2 & \textbf{0.96} & \textbf{0.96} & 0.97 & 1.08 & 0.81 & 1.96 \\
&&& sup & \textbf{0.98} & \textbf{0.98} & \textbf{0.98} & 1.02 & 0.89 & 1.39 \\
\hline
\end{tabular}
}
\caption{Errors for the simulated data sets ($e_1$ and $e_2$ errors of Eq. \eqref{Eq:Error})} 
\label{Table:sim}
\end{table}

\begin{table}[ht]
\small
\centering
\begin{tabular}{lcll|ccccc}
  \hline
 &&&& RKHS+cl & RKHS+CV & fFPE & KR & Naive \\ 
  \hline 
\multirow{8}{1.5cm}{PM10} & \multirow{4}{1.5cm}{$\varepsilon_1$ error} & \multirow{2}{*}{Disc.}& L2 & 0.97 & \textbf{0.74} & 0.82 & 1.48 & 1.65 \\
&&&  sup & 0.92 & \textbf{0.86} & \textbf{0.86} & 1.06 & 1.15 \\
&& \multirow{2}{*}{Func.} & L2 & \textbf{0.70} & 0.72 & 0.82 & 1.59 & 1.68 \\
&&&   sup & \textbf{0.84} & \textbf{0.84} & 0.85 & 1.08 & 1.11 \\
& \multirow{4}{1.5cm}{$\varepsilon_2$ error} & \multirow{2}{*}{Disc.}& L2 & 0.56 & \textbf{0.47} & 0.50 & 0.86 & 0.80 \\
&&&  sup & 0.85 & 0.81 & \textbf{0.80} & 0.98 & 1.02 \\
&& \multirow{2}{*}{Func.} & L2 & \textbf{0.48} & \textbf{0.48} & \textbf{0.48} & 0.85 & 0.76 \\
&&& sup & 0.79 & 0.79 & \textbf{0.78} & 0.99 & 0.97 \\ 
   \hline 
\multirow{8}{1.5cm}{Traffic} & \multirow{4}{1.5cm}{$\varepsilon_1$ error} & \multirow{2}{*}{Disc.}& L2 & \textbf{1.00} & 1.02 & \textbf{1.00} & 1.04 & 67.48 \\ 
&&&  sup & 1.02 & \textbf{1.01} & \textbf{1.01} & \textbf{1.01} & 4.70 \\
&& \multirow{2}{*}{Func.} & L2 & 1.37 & \textbf{1.21} & 1.49 & 1.42 & 240.57 \\
&&&   sup & 1.04 & \textbf{0.98} & 1.05 & 1.11 & 10.22 \\
& \multirow{4}{1.5cm}{$\varepsilon_2$ error} & \multirow{2}{*}{Disc.}& L2 & 0.90 & 0.89 & \textbf{0.83} & 0.95 & 40.57 \\
&&&  sup & 0.99 & \textbf{0.98} & \textbf{0.98} & 0.99 & 4.26 \\
&& \multirow{2}{*}{Func.} & L2 & 0.83 & 0.80 & \textbf{0.75} & 0.92 & 62.07 \\
&&& sup & 0.91 & \textbf{0.89} & 0.90 & 1.00 & 6.82 \\
   \hline 
   \multirow{8}{1.5cm}{Temp} & \multirow{4}{1.5cm}{$\varepsilon_1$ error} & \multirow{2}{*}{Disc.}& L2 & \textbf{0.45} & 0.53 & 5.28 & 1.11 & 37.78 \\ 
&&&  sup & \textbf{0.66} & 0.67 & 2.10 & 1.08 & 3.55 \\
&& \multirow{2}{*}{Func.} & L2 & \textbf{0.63} & 0.82 & 5.30 & 1.12 & 38.20 \\
&&&   sup & \textbf{0.66} & 0.72 & 2.11 & 1.07 & 3.54 \\
& \multirow{4}{1.5cm}{$\varepsilon_2$ error} & \multirow{2}{*}{Disc.}& L2 & 0.25 & \textbf{0.24} & 2.87 & 1.03 & 14.25 \\
&&&  sup & 0.59 & \textbf{0.54} & 1.95 & 1.05 & 2.98 \\
&& \multirow{2}{*}{Func.} & L2 & \textbf{0.37} & \textbf{0.37} & 2.85 & 1.03 & 14.23 \\
&&& sup & \textbf{0.58} & 0.60 & 1.96 & 1.04 & 2.98 \\
   \hline 
   \multirow{8}{1.5cm}{Utility} & \multirow{4}{1.5cm}{$\varepsilon_1$ error} & \multirow{2}{*}{Disc.}& L2 & 0.11 & \textbf{0.09} & 0.10 & 1.16 & 18.72 \\ 
&&&  sup & 0.35 & \textbf{0.34} & \textbf{0.34} & 1.02 & 3.33 \\
&& \multirow{2}{*}{Func.} & L2 & 0.09 & \textbf{0.08} & 0.09 & 1.17 & 18.94 \\
&&&   sup & \textbf{0.29} & 0.30 & 0.31 & 1.01 & 3.37 \\
& \multirow{4}{1.5cm}{$\varepsilon_2$ error} & \multirow{2}{*}{Disc.}& L2 & 0.08 & \textbf{0.07} & \textbf{0.07} & 0.93 & 15.08 \\
&&&  sup & 0.33 & \textbf{0.32} & \textbf{0.32} & 0.98 & 3.19 \\
&& \multirow{2}{*}{Func.} & L2 & \textbf{0.06} & \textbf{0.06} & \textbf{0.06} & 0.92 & 15.16 \\
&&& sup & \textbf{0.28} & \textbf{0.28} & 0.30 & 0.98 & 3.24 \\
   \hline
\end{tabular}
\caption{Errors for real data sets ($e_1$ and $e_2$ of Eq. \eqref{Eq:Error})} 
\label{Table:real}
\end{table}

\begin{table}[ht]
\small
\centering
\begin{tabular}{lcccccc}
  \hline
&RKHS+cl & RKHS+CV & RKHS+cl & RKHS+CV & \multirow{2}{*}{fFPE} & \multirow{2}{*}{KR} \\
& (disc) &  (disc) & (funct) & (funct) && \\ 
  \hline
PM10 & \textbf{0.09} & \textbf{0.09} & 0.89 & 1.08 & 0.71 & 2.62 \\ 
Traffic & 2.13 & 2.14 & 32.66 & 32.06 & \textbf{0.52} & 1.04 \\
Temp & 0.27 & \textbf{0.20} & 3.49 & 3.80 & 0.37 & 0.51 \\
Utility & \textbf{0.23} & \textbf{0.23} & 4.38 & 4.05 & 0.56 & 1.02 \\
   \hline
\end{tabular}
\caption{Execution times (secs) for the real data sets.} 
\label{Table:timesReal}
\end{table}

\begin{table}[ht]
\small
\centering
\begin{tabular}{ll|cccccc}
  \hline
&\multirow{2}{*}{n} & RKHS+cl & RKHS+CV & RKHS+cl & RKHS+CV & \multirow{2}{*}{fFPE} & \multirow{2}{*}{KR} \\
& & (disc) &  (disc) & (funct) & (funct) && \\ 
  \hline 
\multirow{5}{1.5cm}{Sparse with log.} & 50 & \textbf{0.05} & \textbf{0.05} & 0.34 & 0.37 & 0.57 & 1.50 \\ 
&  100 & \textbf{0.05} & 0.06 & 0.34 & 0.39 & 0.63 & 2.60 \\ 
&  150 & \textbf{0.04} & 0.05 & 0.33 & 0.40 & 0.84 & 3.79 \\ 
&  200 & \textbf{0.05} & \textbf{0.05} & 0.33 & 0.39 & 1.00 & 5.09 \\ 
&  250 & \textbf{0.05} & 0.06 & 0.34 & 0.38 & 1.24 & 6.37 \\ 
   \hline 
\multirow{5}{1.5cm}{Sparse with sins} & 50 & 0.06 & \textbf{0.05} & 0.85 & 0.98 & 0.62 & 2.26 \\ 
&  100 & \textbf{0.05} & 0.06 & 0.81 & 1.03 & 0.78 & 4.07 \\ 
&  150 & \textbf{0.05} & 0.06 & 0.84 & 1.05 & 0.96 & 5.85 \\ 
&  200 & \textbf{0.05} & 0.06 & 0.86 & 1.06 & 1.22 & 7.94 \\ 
&  250 & \textbf{0.05} & 0.06 & 0.86 & 1.07 & 1.44 & 9.96 \\ 
   \hline 
\multirow{5}{1.5cm}{O.U} & 50 & 0.26 & \textbf{0.23} & 5.07 & 5.36 & 0.55 & 1.47 \\ 
&  100 & 0.26 & \textbf{0.25} & 5.04 & 5.25 & 0.72 & 2.59 \\ 
&  150 & 0.28 & \textbf{0.26} & 5.14 & 5.51 & 0.93 & 3.73 \\ 
&  200 & \textbf{0.27} & \textbf{0.27} & 5.07 & 5.34 & 1.24 & 5.05 \\ 
&  250 & 0.27 & \textbf{0.25} & 4.94 & 5.23 & 1.58 & 6.30 \\ 
   \hline 
\multirow{5}{1.5cm}{FAR} & 50 & \textbf{0.05} & \textbf{0.05} & 0.53 & 0.59 & 0.51 & 1.71 \\ 
&  100 & \textbf{0.05} & \textbf{0.05} & 0.54 & 0.59 & 0.72 & 3.05 \\ 
&  150 & \textbf{0.05} & \textbf{0.05} & 0.53 & 0.63 & 1.03 & 4.37 \\ 
&  200 & \textbf{0.05} & \textbf{0.05} & 0.54 & 0.63 & 1.35 & 5.88 \\ 
&  250 & \textbf{0.05} & 0.06 & 0.54 & 0.63 & 1.63 & 7.40 \\ 
   \hline
\end{tabular}
\caption{Execution times (secs) for the simulated data sets.} 
\label{Table:timesSim}
\end{table}

\begin{table}[ht]
\small
\centering
\begin{tabular}{l|cc}
  \hline
q & RKHS+cl (funct) & RKHS+CV (funct) \\ 
  \hline
1 & 0.36 & 0.38 \\ 
  2 & 1.71 & 1.82 \\ 
  3 & 5.04 & 5.16 \\ 
  4 & 8.86 & 9.07 \\ 
  5 & 13.92 & 14.25 \\ 
   \hline
\end{tabular}
\caption{Execution times (secs) for FCAR(q) with functional implementation.} 
\label{Table:timesFCAR}
\end{table}

\end{document}